\DeclareMathOperator{\Tr}{Tr}
\newtheorem{theorem}{Theorem}
\newtheorem{proposition}{Proposition}
\newtheorem{assumption}{Assumption}
\newtheorem{remark}{Remark}
\definecolor{orange}{RGB}{255,107,0}
\begin{document}
	\title{Cell-Edge Detection via Selective Cooperation and Generalized Canonical Correlation}
	\author{Mohamed Salah Ibrahim,~\IEEEmembership{Student~Member,~IEEE,}~Ahmed S. Zamzam,~\IEEEmembership{Member,~IEEE,}~Aritra~Konar,~\IEEEmembership{Member,~IEEE,} and~Nicholas~D.~Sidiropoulos,~\IEEEmembership{Fellow,~IEEE}
		\thanks{
		Mohamed~Salah~Ibrahim,~Aritra Konar~and~Nicholas~D.~Sidiropoulos are with the Department of Electrical and Computer Engineering, University of Virginia, Charlottesville, VA, 22904 USA (e-mails: \{salah,\ aritra,\ nikos\}@virginia.edu).  
	    Ahmed S. Zamzam is with the National Renewable Energy Laboratory, Golden, CO, 80401  USA (e-mail: ahmed.zamzam@nrel.gov)
	    }
	    }

\maketitle
\begin{abstract}
Improving the uplink quality of service for users located around the boundaries between cells is a key challenge in LTE systems. Relying on power control, existing approaches throttle the rates of cell-center users, while multi-user detection requires accurate channel estimates for the cell-edge users, which is another challenge due to their low received signal-to-noise ratio (SNR). Utilizing the fact that cell-edge user signals are weak but common (received at roughly equal power) at different base stations (BSs), this paper establishes a connection between cell-edge user detection and generalized canonical correlation analysis (GCCA). It puts forth a GCCA-based method that leverages selective BS cooperation to recover the cell-edge user signal subspace even at low SNR. The cell-edge user signals can then be extracted from the resulting mixture via algebraic signal processing techniques. The paper includes theoretical analysis showing why GCCA recovers the correct subspace containing the cell-edge user signals under mild conditions. The proposed method can also identify the number of cell-edge users in the system, i.e., the common subspace dimension. Simulations reveal significant performance improvement relative to various multiuser detection techniques. Cell-edge detection performance is further studied as a function of how many / which BSs are selected, and it is shown that using the closest three BS is always the best choice.
\end{abstract}

\begin{IEEEkeywords}
Generalized canonical correlation analysis (GCCA), multi-user detection, cellular networks, cell-edge users, uplink detection, base station cooperation, identifiability,   multiple-input-multiple-output (MIMO).
\end{IEEEkeywords}

\section{introduction}
\IEEEPARstart{P}{roviding} high data rates to users located at the boundaries between cellular coverage areas
constitutes a major concern in the current 4G system~\cite{dahlman20134g} and the emerging 5G networks~\cite{andrews2014will}. Even with advanced technologies such as multiple-input-multiple-output (MIMO) and orthogonal frequency division multiplexing (OFDM) \cite{paulraj2004overview,jiang2007multiuser} in place, nomadic users who are close to the cell edge are still prone to suffer from significant performance degradation \cite{you2011cell,khan2016downlink}.  

Owing to the fact that the received signal power exhibits an inverse relationship with the propagation distance, cell-edge user terminals experience high path-loss that eventually results in a severe performance degradation~\cite{boudreau2009interference}. This effect becomes more pronounced when mobile systems operate at higher radio frequency, as expected in the future~\cite{niu2015survey}, thereby rendering the cell-edge user detection problem even more difficult. A variety of techniques, ranging from multi-user detection~\cite{verdu1998multiuser}, user scheduling ~\cite{chayon2017enhanced}, power control~\cite{gochev2013improving}, cooperative communication~\cite{kumar2008throughput,khan2016downlink}, and interference mitigation~\cite{ramamurthi2009cutting,borah2020effect,ghaffar2011interference} have been proposed as possible candidates for tackling this problem. 

Though optimal, the maximum likelihood detector (MLD)~\cite{verdu1998multiuser,proakis2001digital} requires solving an NP--hard combinatorial problem with computational complexity that grows exponentially with the number of users, thereby precluding its use in practical multi-antenna systems. The so-called sphere decoder~\cite{vikalo2003expected} (SD) is a near-optimal detector that can attain the MLD performance with lower complexity. However, it has been proven that its average complexity remains exponential~\cite{jalden2004exponential}. In the low to moderate signal-to-noise ratio (SNR) regime, semi-definite relaxation (SDR) based methods~\cite{ma2002quasi,wiesel2005semidefinite} can yield performance comparable to SD in polynomial time; yet their complexity remains {unaffordable} in terms of practical implementation at present~\cite{luo2010semidefinite}.
  
Whereas equalization-based detectors such as zero-forcing (ZF) and minimum mean square error (MMSE)~\cite{madhow1994mmse} exhibit substantially lower complexity compared to MLD, SD and SDR, their bit error rate (BER) performance often suffers severe degradation, especially at low SNR. The performance of both ZF and MMSE detectors can be further enhanced~\cite{wolniansky1998v} by using successive interference  cancellation (SIC), or decision  feedback (DF)~\cite{moshavi1996multi,duel1993decorrelating}, which rely on iteratively eliminating the strong (cell-center) user signals once  they  are  decoded. While one can also resort to joint detection using base station cooperation~\cite{khattak2008base}, this often degrades the performance because of near-far effects and the inaccurate channel estimates of cell-edge users.  

A major issue with all of the aforementioned detectors is that their performance is dependent on the availability of accurate channel state information of all users. This may be possible for users that are close to their serving base station (cell-center), and hence, reliable detection of such users can be  guaranteed. On the other hand, owing to high path-loss, the signals of cell-edge (weak) users are received at low SNR, which degrades the quality of their channel estimates. This together with inter- and intra-cell interference have a deleterious impact on the BER performance of the cell-edge users~\cite{borah2020effect}. 

Power control~\cite{gochev2013improving} and/or scheduling algorithms~\cite{chayon2017enhanced,xu2013joint} have proven to be successful in significantly improving the quality of service (QoS) of cell-edge users. This, however, comes at the expense of throttling the  cell-center user rates, and consequently the overall system throughput. Moreover, the frequent mobility-induced hand-off of cell-edge users renders their detection task even more challenging~\cite{agarwal2014qos}.

The shortcomings of the prevailing approaches motivate the following question: \emph{Does there exist a low-complexity method that can provide reliable detection of the cell-edge users without knowing their channels or sacrificing the performance of cell-center users by resorting to power control and scheduling techniques?} 

This paper provides an affirmative answer to this question, by proposing an unsupervised learning-based method that leverages {\em selective} base station cooperation to recover cell-edge users signals at low SNR subject to strong inter- and intra- cell interference. Relying on fact that cell-edge users are located at approximately equal distances from different base stations, and hence their received signals are weak but {\em common} (meaning: they are received at low but roughly equal power at different base stations), it shows that reliable detection is possible via (generalized) canonical correlation analysis (G)CCA~\cite{kettenring1971canonical} under mild conditions. 

While base station cooperation~\cite{khattak2008distributed} has been considered before for several tasks such as multi-cell cooperative beamforming~\cite{karakayali2006network}, coordinated power control~\cite{rao2007reverse}, coordinated scheduling~\cite{das2006interference}, and inter-cell interference mitigation~\cite{balachandran2010nice,marsch2011uplink}, cooperation here is utilized for a completely different purpose: as a means for cell-edge user detection, via GCCA. This work adds to the growing list of applications of (G)CCA  in several areas in signal processing and wireless communications, ranging from array processing~\cite{ge2009does}, direction-of-arrival (DoA) estimation~\cite{wu1994music}, radar anti-jamming~\cite{bai2005radar}, blind source separation~\cite{li2009joint,borga2001canonical,liu2007analysis,bertrand2015distributed}, speech processing~\cite{arora2014multi},  multi-view learning~\cite{arora2014multi,chen2019graph}, and more recently in cell-edge user detection using two BSs~\cite{salah2019}. Efficient algorithms have been recently developed for handling large scale GCCA~\cite{FuGCCA,FuMAXVAR,KanSUMCOR}.

Our contributions in this paper are as follows: 
\begin{itemize}
	\item We extend our previous work~\cite{salah2019}, which proposed using classical two-view CCA to detect cell-edge user signals in a cellular network with two cells, to the more general setting.  That is, we consider a scenario with $ L $ cells / views ($ L > 2 $) which involves GCCA~\cite{carroll1968generalization,horst1961generalized} as opposed to CCA for $ L = 2 $. We propose a two-stage approach that uses cooperation to jointly detect cell-edge users signals without prior knowledge of their channel state information. In particular, we first consider the so-called MAXVAR formulation of GCCA~\cite{carroll1968generalization}, and show that it  yields the range space of the cell-edge user signals. We present identifiability conditions under which the common subspace can be recovered. While identifiability conditions for the common subspace of two views have been obtained in~\cite{salah2019}, the conditions we provide here for the general case are more relaxed. 
	Upon identifying the subspace comprising the cell-edge users signals via GCCA, we utilize the (R)ACMA~\cite{van1997analytical} algorithm, which exploits the finite alphabet constraint of the user transmitted signals to retrieve the original cell-edge user signals from the resulting mixture. Fortunately, both MAXVAR GCCA and RACMA admit relatively simple algebraic solution via eigenvalue decomposition. This renders our approach computationally favorable in practice, because the proposed method for solving the cell-edge problem is tantamount to solving two eigenvalue decomposition problems. 
	  
	\item  We present an elegant theoretical analysis which shows that GCCA can reliably estimate the common subspace in the presence of thermal noise and cross interference from users in adjacent cells, under realistic assumptions on the SNR of the different users. 
	
	\item We provide an elegant GGCA strategy that can be used to classify users as cell-edge or cell-center, thereby determining the correct dimension of the common subspace. 
	  
	\item To showcase the effectiveness of our proposed method for cell-edge user detection, we provide a comprehensive suite of simulations that employs a realistic path-loss model from the 3GPP $38.901$ standard. Experiments reveal that our approach attains a considerable improvement in the BER at low SNR under realistic levels of inter-cell interference and dense scenarios with a large number of cell-center users. We compare our proposed method with our previous CCA-based one and different multi-user detection techniques including ZF-SIC and MMSE-SIC which assume perfect knowledge of the cell-center user channels. We show that the proposed GCCA method achieves significant reduction in the BER compared to all baselines. Moreover, our simulations show that using GCCA with the three closest BSs always yields the best detection performance for the cell-edge users. That is, not only does using the three closest BSs always improves the results of using the two closest BSs; but also that using more than the three closest BSs never helps -- neither of which was obvious {\em a priori}.
\end{itemize}
     
A preliminary version of part of the results in this paper has been submitted to the IEEE Global Communications Conference (GLOBECOM) 2020. Relative to the conference submission, this journal version includes i) two theoretical results and their detailed proofs, ii) a new section showing how the number of cell-edge users can be accurately estimated via the proposed method, which is important in practice, and iii) extensive numerical results with more realistic simulated scenarios and thorough discussion.     
     
\subsection{Paper Organization} 	
The outline of this paper is as follows. After briefly reviewing (G)CCA in Section \ref{GCCA_Overview}, Section \ref{Sys_Model} defines the problem statement and highlights the major limitations of the prior cell-edge user detection methods. The proposed detector and the main results are presented in Section \ref{PD}. Then, numerical simulations are provided in Section \ref{Simu}. Conclusions are drawn in Section \ref{Conc}. Long proofs and derivations are relegated to the Appendix. 

\subsection{Notation}
In this work, we use upper and lower case bold letters to denote matrices and column vectors, respectively. For any general matrix $ {\bf N} $, we use ${\bf N}^T$,  ${\bf N}^H$, ${\bf N}^{-1}$, ${\bf N}^{\dagger}$ and $ \Tr({\bf N}) $ to denote the transpose, the conjugate-transpose, the inverse (when it exists), the pseudo-inverse, and the trace of $ {\bf N} $, respectively. $ {\bf N}(:,m) $ denotes the $m$-th column of ${\bf N}$ (MATLAB notation). 
Furthermore, $\mathtt{Re}\{{\bf N}\}$ and $\mathtt{Im}\{{\bf N}\}$ extract the real part and the imaginary part of ${\bf N}$, respectively. Scalars are represented in the normal face, while calligraphic letters are used to denote sets. $ \lVert.\rVert_2 $ and $  \lVert.\rVert_F  $ denote the $ \ell_2 $-norm and the Frobenius norm, respectively. Finally, $ {\bf I}_N $ and $ {\bf 0}_{N \times M} $ denote the $ N \times N $ identity matrix and the $ N \times M $ zero matrix, respectively.

\section{Preliminaries}\label{GCCA_Overview}
Consider $L$ data sets $\{{\bf Y}_\ell \in \mathbb{R}^{M_\ell \times N}\}_{\ell = 1}^L$, where $ {\bf y}^{(n)}_\ell :=  {\bf Y}_\ell(:,n) $ is an $M_\ell$-dimensional feature vector that defines the $\ell$-th view of the $n$-th sample, $\forall n \in \mathcal{N} := \{1,\cdots,N\}$ and $\forall \ell \in \mathcal{L} := \{1,\cdots,L\}$. Without loss of generality, assume that all per-view data vectors $\{{\bf y}^{(n)}_\ell\}_{n=1}^{N}$ are zero-mean, otherwise the sample mean can be subtracted as a pre-processing step. 
While single-view analysis techniques, i.e., $L = 1$, like principal component analysis~\cite{wold1987principal} aim at extracting strong components from the given data matrix, multi-view analysis tools such as coupled matrix factorization (CMF) or canonical correlation analysis (CCA), seek to jointly analyze different views of the data. The main difference between CMF and (G)CCA lies in the optimization criterion: whereas CMF uses a data fitting (usually: least squares) criterion, (G)CCA is based on a ``differential'' criterion that forces it to zoom in only on what is common between the different views. If one of the views includes a very strong component that is absent from the other view(s), a least squares CMF formulation can still be obliged to represent that component. (G)CCA, on the other hand, owing to its use of a differential (balancing) criterion, can ignore principal components no matter how strong they are, as long as they are not common. For instance, the two-view CCA, i.e., $L = 2$, looks for two low-dimensional subspaces ${\bf Q}_1 \in \mathbb{R}^{M_1 \times K_c}$ and ${\bf Q}_2 \in \mathbb{R}^{M_2 \times K_c}$ with $K_c \ll \min \{N,M_\ell\}$, such that the distance between the linear projections of the received signals ${\bf Y}_1$ and ${\bf Y}_2$ onto these subspaces is minimized. From an optimization perspective, the distance-minimization formulation of the two-view CCA can be expressed as~\cite{hardoon2004canonical}
\begin{subequations}\label{dist_mini_CCA}
	\begin{align}
	&\underset{{\bf Q}_1,{\bf Q}_2}{\min}~  \| {\bf Y}_1^T{\bf Q}_1 - {\bf Y}_2^T{\bf Q}_2 \|^2_F\\
	& \text{s.t.} \quad~ {\bf Q}_\ell^T{\bf Y}_\ell{\bf Y}_\ell^T{\bf Q}_\ell = \textbf{{I}},~\forall \; \ell = \{1,2\}
	\end{align}
\end{subequations} 
The columns of ${\bf Q}_\ell$ are called the canonical components of the $\ell$-th view. Problem~\eqref{dist_mini_CCA} can be optimally solved via generalized eigenvalue decomposition~\cite{hotelling1936relations,borga2001canonical}.
To generalize problem~\eqref{dist_mini_CCA} to consider the case of multiple views $(L \geq 3)$, it is natural to adopt a pair-wise matching criterion~\cite{carroll1968generalization}. That is, we consider the optimization problem
\begin{subequations}\label{SUM-CORR}
	\begin{align}
	&\underset{\{{\bf Q}_\ell\}_{\ell=1}^{L}}{\min}~ \sum_{\ell = 1}^{L-1}\sum_{\ell' > \ell}^{L} \| {\bf Y}_\ell^T{\bf Q}_\ell - {\bf Y}_{\ell'}^T{\bf Q}_{\ell'} \|^2_F\\
	& \text{s.t.} \quad~ {\bf Q}_\ell^T{\bf Y}_\ell{\bf Y}_\ell^T{\bf Q}_\ell = \textbf{{I}},~\forall \; (\ell,\ell') \in \mathcal{L}.
	\end{align}
\end{subequations}
Problem~\eqref{SUM-CORR} is referred to as the sum-of-correlations (SUMCOR) generalized CCA~\cite{carroll1968generalization}. Although SUMCOR is known to be NP--hard in its general form \cite{rupnik2013comparison,zhang2011towards}, several efficient and scalable algorithms have been developed to obtain high-quality approximate solutions~\cite{KanSUMCOR,rupnik2013comparison,zhang2011towards}. 

Instead of minimizing the  distance between the reduced-dimension views, another formulation seeks a low-dimensional common latent representation, namely ${\bf G} \in \mathbb{R}^{N \times K_c} $, of the different views. This leads to the so-called maximum-variance (MAXVAR) GCCA formulation, which is given by
\begin{subequations}\label{MAXVAR}
	\begin{align} \label{MAXVARa}	
		&\underset{\{{\bf Q}_\ell\}_{\ell=1}^{L},{\bf G}}{\min}~ \sum_{\ell = 1}^{L} \| {\bf Y}_\ell^T{\bf Q}_\ell - {\bf G} \|^2_F \\
		& \text{s.t.} \quad~ {\bf G}^T{\bf G} = {\bf I}
	\end{align}
\end{subequations} 
Although both SUMCOR and MAXVAR aim at finding highly-correlated reduced-dimension views, and their solutions can be shown to coincide in the special case of $L=2$ (CCA), they are generally different for $L>2$ (GCCA). While problem~\eqref{MAXVAR} introduces an additional $NK_c$ variables compared to SUMCOR, it replaces the multiple constraints in~\eqref{SUM-CORR} with a single orthonormality constraint on the matrix ${\bf G}$. Fortunately, the MAXVAR GCCA formulation admits algebraic solution via eigenvalue decomposition. To see this, one can fix ${\bf G}$ and solve~\eqref{MAXVAR} with respect to ${\bf Q}_\ell$'s. Then, upon assuming that the matrices $\{{\bf Y}_\ell\}_{\ell = 1}^{L}$ are full row rank, it follows that ${\bf Q}_\ell^* = ({\bf Y}_\ell{\bf Y}^T_\ell)^{-1}{\bf Y}_\ell{\bf G}$. Substituting back ${\bf Q}_\ell^*$ in~\eqref{MAXVAR} and expanding the cost function in~\eqref{MAXVARa}, one can recast~\eqref{MAXVAR} as 
	 \begin{subequations}\label{ED_MAXVAR}
	 	\begin{align}
	 	&\underset{{\bf G}{\in \mathbb{R}^{N\times K_c}}}{\max}~ \text{Tr}({\bf G}^T{\bf A}{\bf G}) \\
	 	& \text{s.t.} \quad~ {\bf G}^T{\bf G} = {\bf I}
	 	\end{align}
	 \end{subequations} 
where $\bf{A} $ is defined as ${\bf A} := \sum_{\ell = 1}^{L}{\bf Y}_\ell^T({\bf Y}_\ell{\bf Y}_\ell^T)^{-1}{\bf Y}_\ell$. It can be easily seen that~\eqref{ED_MAXVAR} is nothing but an eigenvalue problem with the optimal solution ${\bf G}^\star$ being the first $K_c$ principal eigenvectors of the matrix ${\bf A}$~\cite{van1983matrix}. In what follows, we will focus on the MAXVAR formulation to show how GCCA relates to the problem of cell-edge user detection in multi-cell, multi-user MIMO systems. 

\section{System Model}\label{Sys_Model}
Consider an uplink transmission scenario in a cellular network with $ L $ regular hexagonal cells -- each cell has a base station (BS) located at its center, as shown in Fig.~\ref{SM1}. The $ \ell $-th BS is equipped with $ M_\ell $ antennas, and serves $ K_\ell $ single-antenna users. Let $ K_{e_\ell} $ denote the number of cell-edge users served by the $ \ell $-th BS, with $ K_{e_\ell} < K_\ell, \forall \; \ell \in \mathcal{L}: = \{1, \cdots, L\}$. The uplink channel vector representing the small-scale fading between the $ k $-th user located in the $ j $-th cell and the $ \ell $-th BS is given by $ {\bf z}_{\ell kj} \in \mathbb{C}^{M_\ell}$. The entries of $ {\bf z}_{\ell kj} $ are independent identically distributed (i.i.d.) complex Gaussian random variables with zero mean and variance $ 1/ M_\ell $. This corresponds to a favorable propagation medium with rich scattering.  
The coefficient that accounts for the signal attenuation due to distance (path-loss) between the $ \ell $-th BS and the $ k $-th user in the $ j $-th cell is given by $ \alpha_{\ell kj} \in \mathbb{R}$. Accordingly, the overall uplink channel vector is given by  
\begin{align}\label{PS1}
	{\bf h}_{\ell kj}  = \sqrt{\alpha_{\ell kj}}{\bf z}_{\ell kj}
\end{align}
Throughout this work, we assume that the channel vectors are \emph{not} known a priori at the BSs. 
\begin{figure}[!t]
	\centering
	\includegraphics[width=3in]{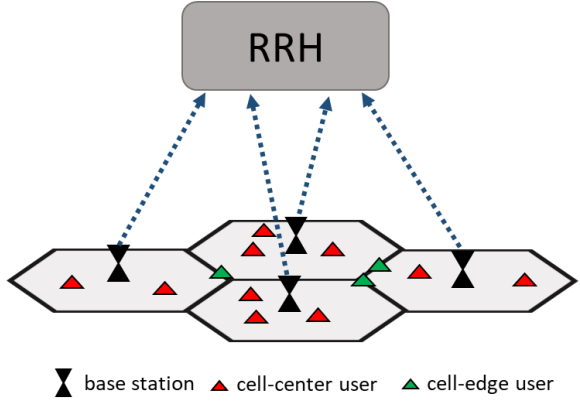}
	\caption{System model}
	\label{SM1}
\end{figure}

\subsection{Uplink Transmission}
The considered users in the system are assumed to be allocated the same time-frequency resource. Also, assume that all user transmissions are heard at all BSs, thereby introducing both intra- and inter-cell interference at each BS. Also, all user transmissions are assumed to be synchronized at the BSs (this assumption can be lifted; see below and in \cite{salahtwc}). Define ${\bf x}_{kj} \in \mathbb{R}^N$ as the binary vector transmitted by the $ k $-th user in the $  j$-th cell. While we assume binary phase shift keying (BPSK) transmissions, our approach also works for other modulation schemes but with slight variations in the second stage (we provide more details in subsection \ref{RACMA}). The received signal, $ {\bf Y}_\ell \in \mathbb{C}^{M_\ell \times N} $, at the $ \ell $-th BS is given by
\begin{equation}\label{UT1}
	{\bf Y}_\ell = \sum_{j = 1}^{L}\sum_{k = 1}^{K_j} \sqrt{p_{kj}}{\bf h}_{\ell kj}{\bf x}_{kj}^T + {\bf N}_{\ell}
\end{equation}
where $ {\bf h}_{\ell kj} $ is the uplink channel response vector as defined in~\eqref{PS1}, $ p_{kj} $ is the transmitted signal power of the $ k $-th user in the $ j $-th cell. 
The term $ {\bf N}_\ell \in \mathbb{C}^{M_\ell \times N}$  
contains i.i.d. entries with zero mean and variance $ \sigma^2/N $, i.e., $ \mathbb{E}[ {\bf N}_\ell{\bf N}^H_\ell ] = \sigma^2{\bf I} $. Throughout this work, we assume that neither scheduling algorithms nor power control is  employed. In other words, all users are always active, and all users are assigned the same transmission power, i.e., $ p_{kj} = p, \forall\; k,j $. Scheduling and power control algorithms can still be employed on top of the proposed framework for additional traffic shaping and other system considerations.  

In this paper, we assume that all BSs are connected to a remote radio head (RRH) via backhaul links which can be either microwave links or high speed optical fiber cables~\cite{khattak2008distributed}. Each BS forwards its received signal to the RRH. Although base station cooperation has been considered in several earlier papers ~\cite{karakayali2006network,rao2007reverse,das2006interference,balachandran2010nice,marsch2011uplink}, it is adopted here for an entirely different purpose. That is, we exploit the joint processing of the BS signals at the RRH to provide reliable detection of cell-edge users whose signals are received at low SNR without knowledge of any of the user channels. One key challenge for all the BS cooperation techniques in the literature is time synchronization~\cite{khattak2008distributed} of the received signal at the RRH. Even though all prior works assumed perfect synchronization at the RRH~\cite{khan2016downlink}, we recently developed a low-complexity CCA-based algorithm that can handle (lack of) synchronization for two BSs~\cite{salahtwc}, and can be easily modified to deal with the multi-cell case. 

\subsection{Prior Art: Limitations and Challenges}
We now provide a brief discussion of the limitations of the prior art used to detect cell-edge user signals. To this end, it is convenient to write \eqref{UT1} as
\begin{align}\label{CE1}
	{\bf Y}_\ell = {\bf H}_{\ell p_\ell}{\bf X}_{p_\ell}^T + {\bf H}_{\ell e_\ell}{\bf X}^T_{e_\ell}+ \sum_{j \neq \ell}^L{\bf H}_{\ell j}{\bf X}_{j}^T
	+ {\bf N}_\ell   
\end{align}
where we collect the transmitted signals of the cell-center users and the cell-edge users served by the $ \ell $-th BS in the matrices $ {\bf X}_{p_\ell} \in \mathbb{R}^{N \times (K_\ell - K_{e_\ell})}$ and $ {\bf X}_{e_\ell} \in \mathbb{R}^{N \times K_{e_\ell}} $, respectively, and the transmitted signals by the users served by the $ j $-th BS in $ {\bf X}_{j} \in \mathbb{R}^{N \times K_j} $.
Further, the matrices $ {\bf H}_{\ell p_\ell} \in \mathbb{C}^{M_\ell \times (K_\ell - K_{e_\ell})}$, $ {\bf H}_{\ell e_\ell} \in \mathbb{C}^{M_\ell \times  K_{e_\ell}}$ and $ {\bf H}_{\ell j} \in \mathbb{C}^{M_\ell \times K_j}$ hold on their columns the respective channel vectors. Note that we absorbed the transmitted signal power $ p $ of each user in its channel vectors. 

The goal is to recover the cell-edge user signals, $ {\bf X}_{e_\ell} $, from the received signal $ {\bf Y}_\ell $. The traditional approach to recover $ {\bf X}_{e_\ell} $ is to first estimate all user channels via transmitting orthogonal pilots, and then employ the ZF or MMSE detector to decode cell-edge user signals using their estimated channel. This approach usually fails to provide reasonable performance due to the effect of intra-cell interference (transmissions of strong cell-center users), the effect of inter-cell interference (transmissions from users in other cells) and noise. The signals of cell-edge users are consequently received at very low signal to interference plus noise ratio (SINR), which causes high uncertainty in their channel estimates, which in turn seriously degrades their detection performance. One workaround is to use ZF or MMSE followed by successive interference cancellation to decode and subtract the cell-center user signals, thereby mitigating / eliminating the intra-cell interference effect. While this approach can slightly improve the detection performance, the inter-cell interference and channel estimation errors still cause severe performance degradation.

Another potential solution that mitigates the inter-cell interference effect~\cite{gesbert2010multi}, and can indeed enhance the detection of such users, is to use power control and/or scheduling algorithms together with BS cooperation techniques~\cite{zhang2010cooperative}. However, this comes at the expense of throttling the transmission of cell center users, and hence, it also severely degrades the overall system throughput. 

In the forthcoming section, we will present a two-stage learning-based approach that leverages BS cooperation to reliably identify cell-edge user signals without knowing their channels, and without resorting to either power control or scheduling.
\section{Proposed Detector and Identifiability Analysis}\label{PD}
The cell-edge users are located far but at roughly equal distances from different BSs. In other words, if we use the distance-power relationship, their received signals are weak but \textit{common} to multiple BSs, i.e., their signals are received at relatively equal power at different BSs. We will show how GCCA can efficiently recover the cell-edge users' signal range space at low SNR, even if they are buried under strong intra- and inter-cell interference. Notice that, owing to the broadcast nature of the wireless medium, all user transmissions are (over)heard, albeit weakly, at all BSs. Hence all user signals are, in principle, common. However, we use the phrases ``common'' for cell-edge users versus ``private'' for cell-center users to reflect the power (im)balance of different users. That is, cell-center users signals are received at very high SNR, e.g., $[20,30]$ dB, at their serving BS, and very low SNR, eg. $[-10,-30]$ dB at non-serving BSs. On the other hand, cell-edge user signals are received at low but roughly equal SNR, e.g., $[3,5]$ dB at multiple BSs. 

From the geometry of the hexagonal cells shown in Fig.~\ref{SM1}, one can argue that a user can be common to two or three BSs, i.e., it can be located at relatively equal distance from two or three BSs. For example, a user located on the left corner of the common edge between the top and bottom cells in Fig.~\ref{SM1} is common to the BSs in these two cells and the one on the left. Based on this fact, we will design a detector that can recover cell-edge user transmitted signals from the signal received at three BSs. The case of more than three will also be considered in the simulations section.

We will first consider the noiseless case to find the identifiability conditions required to recover the cell-edge user signals. Identifiability is very important as it provides sufficient conditions under which the recovery of the cell-edge user signals via (G)CCA is guaranteed under ideal (noiseless) conditions. Whereas we have derived identifiability conditions in the case of two BSs~\cite{salah2019}, it turns out that the conditions for three BSs are more relaxed (details will be provided in the next subsection). 

\subsection{Noiseless Case}  
Let $ K_c= \sum_{j = 1}^{L} K_{e_j}$ denote the total number of cell-edge users. 
Assume that all cell-edge users are located around the intersection point of the three hexagonal cells. Thus equation~\eqref{CE1}, with $ L = 3 $, can be rewritten as
\begin{align}\label{NC1}
{\bf Y}_\ell = {\bf H}_{\ell p_\ell}{\bf X}_{p_\ell}^T + {\bf H}_{\ell c}{\bf X}^T_{c}+ \sum_{j \neq \ell}^L{\bf H}_{\ell p_j}{\bf X}_{p_j}^T
+ {\bf N}_\ell   
\end{align}
where the subscripts $ `p_\ell` $ and $ `c` $ stand for private to the $ \ell $-th BS and common to all base stations, respectively. The matrices $ {\bf X}_c \in \mathbb{R}^{N \times Kc} $ and $ {\bf X}_{p_j} \in \mathbb{R}^{N \times (K_j - K_{e_j})} $ hold the transmitted signals of the cell-edge users and cell-center users in the $ j $-th cell, respectively. Accordingly, $ {\bf H}_{\ell c} \in \mathbb{C}^{M_\ell \times  K_c}$ and $ {\bf H}_{\ell p_j} \in \mathbb{C}^{M_\ell \times( K_j - K_{e_j})}$ hold on their columns the corresponding channel vectors. Further, we define $ {\bf E}_\ell := \sum_{j \neq \ell}^L{\bf H}_{\ell p_j}{\bf X}_{j p_j}^T
+ {\bf N}_\ell $ to denote the summation of inter-cell interference and noise at the $ \ell $-th BS. Thus,~\eqref{NC1} can be rewritten as
\begin{align}\label{NC2}
{\bf Y}_\ell = {\bf H}_{\ell p_\ell}{\bf X}_{p_\ell}^T + {\bf H}_{\ell c}{\bf X}^T_{c}+ {\bf E}_\ell   
\end{align} 

As a pre-processing step, let us transform the received signals to the real domain by constructing the matrix $ \overline{\bf Y}_\ell := [{\bf Y}_\ell^{(r)};{\bf Y}_\ell^{(i)}]  \in \mathbb{R}^{2M_\ell \times N}$, where  $ {\bf Y}_\ell^{(r)} =      \mathtt{Re}\{{\bf Y}_\ell\} $ and $ {\bf Y}_\ell^{(i)} = \mathtt{Im}\{{\bf Y}_\ell\}  $. Similarly, form $ \overline{\bf H}_{\ell p_\ell} := [{\bf H}_{\ell p_\ell}^{(r)};{\bf H}_{\ell p_\ell}^{(i)}]  \in \mathbb{R}^{2M_\ell \times( K_\ell - K_{e_\ell})}$, $ \overline{\bf H}_{\ell c} := [{\bf H}_{\ell c}^{(r)};{\bf H}_{\ell c}^{(i)}]  \in \mathbb{R}^{2M_\ell \times K_c}$ and $ \overline{\bf E}_\ell = [{\bf E}^{(r)}_\ell;{\bf E}^{(i)}_\ell] \in \mathbb{R}^{2M_\ell \times N}$. Therefore,~\eqref{NC2} can be equivalently expressed as
\begin{equation}\label{MV2}
\overline{\bf Y}_\ell = \overline{\bf H}_{\ell p_\ell}{\bf X}_{p_{\ell}}^T + \overline{\bf H}_{\ell c}{\bf X}^T_c + \overline{\bf E}_\ell.   
\end{equation}
Recall that~\eqref{MV2} is nothing but a more compact form of~\eqref{UT1} in the real domain. Our goal now is to recover the cell-edge user signals ${\bf X}_c$ given $\{\overline{\bf Y}_\ell\}_{l=1}^{L}$. We will start by showing how the solution of the MAXVAR GCCA formulation~\eqref{MAXVAR} is related to the column space of the cell-edge user signals, and then we will explain how the original signals can be recovered from the given solution. Upon defining the matrix ${\bf V}^{(L)} \in \mathbb{R}^{(L-1)N \times ((L-1)K_c + \sum\limits_{\ell=1}^L(K - K_{e_\ell}) )}$ as follows,
\[
{\bf V}^{(L)} = \begin{bmatrix} 
{\bf X}_{p1} & -{\bf X}_c & {\bf X}_{p2} & & \\
\vdots &  & \ddots & \ddots & \\
{\bf X}_{p1} &  &   &    -{\bf X}_c    & {\bf X}_{pL} 
\end{bmatrix}
\]
we have the following result.  
\begin{theorem}
	In the case where $ \overline{\bf E}_\ell = 0 $, if the matrix ${\bf W}_\ell := [\overline{\bf H}_{\ell c},\overline{\bf H}_{\ell p_\ell}] \in \mathbb{R}^{2M_\ell \times (K_c +  K_\ell  - K_{e_\ell})}$ and the  matrix ${\bf V}^{(L)} \in \mathbb{R}^{(L-1)N \times ((L-1)K_c + \sum\limits_{\ell=1}^L(K - K_{e_\ell}) )}$ are full column rank, 
    then the optimal solution $ {\bf G}^{\star} $ of problem~\eqref{MAXVAR} is given by $ {\bf G}^{\star} = {\bf X}_c {\bf F} $, where $ {\bf F} $ is a $ K_c \times K_c $ non-singular matrix. 
\end{theorem}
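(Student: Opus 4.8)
The plan is to reduce the statement to an eigen-analysis of the matrix $ {\bf A} $ from the MAXVAR derivation in Section~\ref{GCCA_Overview}, and then to translate the two full-column-rank hypotheses into a clean subspace-intersection argument. First, since $ \overline{\bf E}_\ell = {\bf 0} $, I would write the model~\eqref{MV2} as $ \overline{\bf Y}_\ell = {\bf W}_\ell [{\bf X}_c,{\bf X}_{p_\ell}]^T $, so that $ \overline{\bf Y}_\ell^T = [{\bf X}_c,{\bf X}_{p_\ell}]{\bf W}_\ell^T $. Because $ {\bf W}_\ell $ is full column rank, $ {\bf W}_\ell^T $ is full row rank, and hence $ \text{range}(\overline{\bf Y}_\ell^T) = \text{range}([{\bf X}_c,{\bf X}_{p_\ell}]) =: \mathcal{S}_\ell $. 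Consequently each summand $ {\bf P}_\ell := \overline{\bf Y}_\ell^T(\overline{\bf Y}_\ell\overline{\bf Y}_\ell^T)^{-1}\overline{\bf Y}_\ell $ of $ {\bf A} = \sum_{\ell} {\bf P}_\ell $ is exactly the orthogonal projector onto $ \mathcal{S}_\ell $, and by the reduction to~\eqref{ED_MAXVAR} the optimal $ {\bf G}^\star $ consists of the $ K_c $ principal eigenvectors of $ {\bf A} $.

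Next I would exploit that $ {\bf A} $ is a sum of $ L $ orthogonal projectors. For any unit-norm $ {\bf g} $, $ {\bf g}^T{\bf A}{\bf g} = \sum_{\ell}\|{\bf P}_\ell{\bf g}\|_2^2 \le L $, with equality iff $ {\bf P}_\ell {\bf g} = {\bf g} $ for every $ \ell $, i.e., iff $ {\bf g}\in\mathcal{S} := \bigcap_{\ell=1}^{L}\mathcal{S}_\ell $. Thus the largest eigenvalue of $ {\bf A} $ equals $ L $, its eigenspace is exactly $ \mathcal{S} $, and every remaining eigenvalue is strictly smaller. Hence, provided $ \dim\mathcal{S} = K_c $, the principal $ K_c $-dimensional eigenspace of $ {\bf A} $ is uniquely $ \mathcal{S} $, and $ {\bf G}^\star $ is an orthonormal basis of $ \mathcal{S} $.

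The heart of the proof is to show $ \mathcal{S} = \text{range}({\bf X}_c) $. The inclusion $ \text{range}({\bf X}_c)\subseteq\mathcal{S} $ is immediate, since $ {\bf X}_c $ is a common block of every $ [{\bf X}_c,{\bf X}_{p_\ell}] $. For the reverse inclusion, take $ {\bf g}\in\mathcal{S} $; then for each $ \ell $ there exist $ {\bf a}_\ell,{\bf b}_\ell $ with $ {\bf g} = {\bf X}_c{\bf a}_\ell + {\bf X}_{p_\ell}{\bf b}_\ell $. Eliminating $ {\bf g} $ by subtracting the $ \ell=1 $ representation from the $ \ell $-th one for $ \ell = 2,\dots,L $ yields a homogeneous linear system in the unknowns $ ({\bf b}_1,\{{\bf a}_\ell - {\bf a}_1\}_{\ell\ge 2},\{{\bf b}_\ell\}_{\ell\ge 2}) $ whose coefficient matrix is, up to the sign convention on the blocks, precisely $ {\bf V}^{(L)} $. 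Since $ {\bf V}^{(L)} $ has full column rank, this system admits only the trivial solution, forcing $ {\bf b}_\ell = {\bf 0} $ for all $ \ell $ (and $ {\bf a}_\ell = {\bf a}_1 $); therefore $ {\bf g} = {\bf X}_c{\bf a}_1\in\text{range}({\bf X}_c) $. The same full-rank hypothesis also guarantees that $ {\bf X}_c $ is full column rank --- a nonzero null vector of $ {\bf X}_c $ would populate a single $ -{\bf X}_c $ block of $ {\bf V}^{(L)} $ to produce a nontrivial null vector of $ {\bf V}^{(L)} $ --- so $ \dim\mathcal{S} = \dim\text{range}({\bf X}_c) = K_c $.

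Combining these facts, $ {\bf G}^\star $ is an orthonormal basis of $ \text{range}({\bf X}_c) $ and $ {\bf X}_c\in\mathbb{R}^{N\times K_c} $ has full column rank, so $ {\bf G}^\star = {\bf X}_c{\bf F} $ for a unique nonsingular $ {\bf F}\in\mathbb{R}^{K_c\times K_c} $, as claimed. I expect the main obstacle to be the third paragraph: carefully verifying that the homogeneous system obtained from the common-subspace representations is exactly encoded by $ {\bf V}^{(L)} $ (tracking the block pattern and signs), together with the spectral-gap/multiplicity argument of the second paragraph, which certifies that the principal eigenspace has dimension exactly $ K_c $ and is therefore uniquely identified with $ \text{range}({\bf X}_c) $.
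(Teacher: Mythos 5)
Your argument is correct and, notably, it is self-contained, whereas the paper does not actually prove Theorem~1 at all: its ``proof'' is a one-line pointer to an external reference (\texttt{sornsen2020}) said to contain a general identifiability analysis of GCCA. Your route --- identifying each summand of $ {\bf A} $ as the orthogonal projector onto $ \mathcal{S}_\ell = \mathrm{range}([{\bf X}_c,{\bf X}_{p_\ell}]) $ (using full column rank of $ {\bf W}_\ell $), observing that the eigenvalue-$L$ eigenspace of a sum of $L$ projectors is exactly $ \bigcap_\ell \mathcal{S}_\ell $, and then showing via the homogeneous system encoded by $ {\bf V}^{(L)} $ that this intersection is precisely $ \mathrm{range}({\bf X}_c) $ of dimension $K_c$ --- is almost certainly the intended argument: the otherwise mysterious block structure and sign pattern of $ {\bf V}^{(L)} $ in the theorem statement exist exactly to encode the subtraction step in your third paragraph, and your observation that full column rank of $ {\bf V}^{(L)} $ also forces $ {\bf X}_c $ to have full column rank (needed both for $ \dim\mathcal{S}=K_c $ and for the nonsingularity of $ {\bf F} $) is a detail the theorem statement leaves implicit. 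One minor technical caveat: when $ 2M_\ell > K_c + K_\ell - K_{e_\ell} $, the noiseless $ \overline{\bf Y}_\ell $ is not full row rank, so $ (\overline{\bf Y}_\ell\overline{\bf Y}_\ell^T)^{-1} $ should be read as a pseudo-inverse; the projector interpretation and the reduction to~\eqref{ED_MAXVAR} survive unchanged, and the paper itself elides the same point, so this does not affect the validity of your proof.
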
 
\begin{proof}
	See ~\cite{sornsen2020} which offers a comprehensive identifiability analysis of GCCA for general $L$.  
\end{proof}

\begin{remark}
    The full column rank condition on ${\bf W}_\ell$ requires that: $i)$ the number of antennas at each base station is greater than or equal to half the number of users assigned to that base station, plus any cell-edge users associated with the other two base stations; and $ii)$ the channel vectors of different users to be linearly independent. The first requirement is supported by massive MIMO technology that aims at equipping the base station with hundreds of antennas~\cite{larsson2014massive}. Further, because the user channel vectors can be assumed to be drawn from an absolutely continuous distribution (see~\eqref{PS1}), the latter condition is satisfied with probability one. The full column rank condition on ${\bf V}$ is related to the number of samples $N$ required to guarantee recovery. For $ L = 3$, we need: $i)$ the number of samples to be at least equal to the total number of cell-edge users plus one half the total number of cell-center users served by the three BSs; and $ii)$ the transmitted sequences of different users to be linearly independent. For finite alphabets, the two conditions are satisfied with very high probability for modest $N$, as the user transmissions are statistically independent. In addition, it can be easily seen that the requirement on the number of samples $N$ becomes less restrictive for $L > 2$ compared to the earlier results for the two-view case in ~\cite{salah2019}.
\end{remark}
 Theorem $1$ asserts that in an ideal scenario where the effect of inter-cell interference and thermal noise is negligible compared to the intra-cell interference, GCCA successfully recovers the subspace spanned by the cell-edge users signals, under mild conditions. We point out that such a scenario can arise in practice, especially if all cell-center users are close to their serving BS. 
 An interpretation of the statement of Theorem $1$ is that if there exist several spatio-temporal signal views that contain very strong but different components (in our case here arising from the transmissions of each group of cell-center users) and very weak but common components (in our case here the received cell-edge user signals), then GCCA recovers the common components range space irrespective of the power of the individual components.

However, in practical deployment scenarios we cannot guarantee the above idealized assumptions. One of the main contributions of this paper is that it offers an analysis of GCCA performance in a realistic scenario with inter-cell interference and noise. This is coming up next. 

\subsection{Noisy Case}
We now provide analysis showing how cell-edge users signals can be identified when $ \overline{\bf E}_\ell \neq 0 $. In particular, we show that the signal subspace recovered by identifying the $K_c$ principal eigenvectors of ${\bf A}$ is indeed containing the cell-edge user transmitted messages. As shown earlier, this is equivalent to solving the MAXVAR GCCA problem.

Upon defining $ K_s = \sum_{\ell = 1}^L K_\ell $, let us write~\eqref{NC1} in a more compact form as
\begin{align}\label{IC1}
	{\bf Y}_\ell = {\bf H}_{\ell}{\bf X}^T + {\bf N}_\ell   
\end{align}
where $ {\bf H}_\ell = [{\bf H}_{\ell c},{\bf H}_{\ell p_1},\cdots,{\bf H}_{\ell p_L}] \in \mathbb{C}^{M_\ell \times K_s} $, and $ {\bf X} = [{\bf X}_c, {\bf X}_{p_1},\cdots,{\bf X}_{p_L} ] \in \mathbb{R}^{N \times Ks} $. Further, we can use~\eqref{PS1} to factor $ {\bf H}_{\ell}  = {\bf Z}_\ell {\bf P}^{1/2}_\ell$, where the columns of $ {\bf Z}_\ell $ are the channel vectors representing small scale fading between the $ k $-th user and the $ \ell $-th BS, for $ k \in \mathcal{K}_s := \{1, \cdots, K_s\} $. Accordingly, each entry of the diagonal matrix $ {\bf P}_\ell$ represents the corresponding received signal power that incorporates the transmitted power and the path-loss between each user and the $ \ell $-th BS. Thus,~\eqref{IC1} can be equivalently written as
\begin{align}\label{IC2}
	{\bf Y}_\ell = {\bf Z}_{\ell}{\bf P}_\ell^{1/2}{\bf X}^T + {\bf N}_\ell.  
\end{align}

\begin{assumption}[AS1]
	Assume that the matrices ${\bf Z}_\ell$ and $ {\bf C} := {\bf X} / \sqrt{N} $ are approximately orthonormal, i.e., ${\bf Z}^H_\ell {\bf Z}_\ell \approx {\bf I}_{M_\ell} $ for all $\ell \in \mathcal{L}$ and ${\bf C}^T {\bf C} \approx {\bf I}_{K_s} $.
\end{assumption}

\begin{remark}
	Recall that the matrices $ {\bf Z}_\ell $ contain i.i.d entries with zero mean and variance $ 1/M_\ell $, and hence, the approximate orthonormality assumption on $ {\bf Z}_\ell $ requires the number of base station antennas to be greater than the total number of users assigned to all base stations, and large enough for the sample average (inner product of different columns) to be close to the ensemble average ($\boldsymbol{0}$). This requirement on the number of antennas is supported by massive MIMO technology that aims at equipping base stations with hundreds of antennas. On the other hand, the approximate orthonormality of $ {\bf C} $ requires the sequence length to be greater than the total number of users and the columns of $ {\bf C} $ to be linearly independent. For finite alphabets, the latter condition is satisfied with very high probability for modest $N$.
\end{remark}
Let $\gamma_{k \ell}$ denote the received SNR of the $k$-th user at the $\ell$-th BS. Then, we define $r_{k\ell}$ as $$ r_{k\ell} := \frac{\gamma_{k\ell}}{\gamma_{k\ell}+1}$$ for all $k \in \mathcal{K}_s$ and $\ell \in \mathcal{L}$. For any user $k$, the effective SNR $\eta_k$ is defined as $\eta_k := \sum_{\ell =1}^{L} r_{k\ell}$. We will make use of the following assumption on the cell-edge users.

\begin{assumption}[AS2]
	For any cell-edge user $i$ and cell-center user $j$, $\eta_i > \eta_j$.  
\end{assumption}

\begin{remark}
	Empirically, the relation between the average received power $P_r$ and the distance is determined by the expression $P_r \propto d^{-\lambda}$ where $d$ and $\lambda$ denote the distance and the path loss exponent, respectively. The noise power at the receiver is given by $\sigma^2$. Then, the value of $r_{k\ell}$ as function of the distance between the user and the BS ($d_{k\ell}$) is given by
	\begin{align}
		r_{k\ell} = \frac{(d_{k\ell})^{-\lambda}}{(d_{k\ell})^{-\lambda} + \sigma^2/c}
	\end{align}
	where $c$ is constant that depends on the communication medium and the antenna characteristics. This function exhibits a sharp phase transition which means that the ratio $r_{j\ell}$ for cell-center users at other cells is almost zero if they are dropped up to certain distance from their serving BS such that their received SNR at their non-serving BSs is a few dBs below zero, while all the ratios $r_{i\ell}$ for cell-edge users at their adjacent BSs is close to one if their received SNR at those BSs is a few dBs above zero. 
\end{remark} 

Our main result is the following:
\begin{proposition}
	In the presence of inter-cell interference and additive noise, under assumptions (AS1) and (AS2), 
	the optimal solution $ {\bf G}^{\star} $ of problem~\eqref{MAXVAR} is given by $ {\bf G}^{\star} \approxeq {\bf X}_c{\bf P}$ where $ \bf P $ is any $K_c \times K_c$ non-singular matrix. 
\end{proposition}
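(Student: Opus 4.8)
The plan is to show that the matrix $\mathbf{A}=\sum_{\ell=1}^{L}\mathbf{Y}_\ell^T(\mathbf{Y}_\ell\mathbf{Y}_\ell^T)^{-1}\mathbf{Y}_\ell$ — whose top $K_c$ eigenvectors form $\mathbf{G}^\star$ by the reduction in~\eqref{ED_MAXVAR} — concentrates, under (AS1), to a matrix whose eigenstructure is governed by the effective SNRs $\eta_k$. Substituting $\mathbf{X}=\sqrt{N}\mathbf{C}$ into~\eqref{IC2} gives $\mathbf{Y}_\ell=\sqrt{N}\mathbf{Z}_\ell\mathbf{P}_\ell^{1/2}\mathbf{C}^T+\mathbf{N}_\ell$. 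First I would form the $M_\ell\times M_\ell$ Gram matrix $\mathbf{Y}_\ell\mathbf{Y}_\ell^T$. Using $\mathbf{C}^T\mathbf{C}\approx\mathbf{I}$, the signal autocorrelation becomes $N\mathbf{Z}_\ell\mathbf{P}_\ell\mathbf{Z}_\ell^T$; the signal--noise cross terms $\sqrt{N}\mathbf{Z}_\ell\mathbf{P}_\ell^{1/2}\mathbf{C}^T\mathbf{N}_\ell^T$ vanish because the transmitted sequences are statistically independent of the thermal noise, so that $\mathbf{C}^T\mathbf{N}_\ell^T$ concentrates to zero for large $N$; and $\mathbf{N}_\ell\mathbf{N}_\ell^T\approx\sigma^2\mathbf{I}_{M_\ell}$ by the noise model. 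This yields $\mathbf{Y}_\ell\mathbf{Y}_\ell^T\approx N\mathbf{Z}_\ell\mathbf{P}_\ell\mathbf{Z}_\ell^T+\sigma^2\mathbf{I}_{M_\ell}$.

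Next I would invert this Gram matrix by exploiting the orthonormality $\mathbf{Z}_\ell^T\mathbf{Z}_\ell\approx\mathbf{I}$ of (AS1): on the column space of $\mathbf{Z}_\ell$ the matrix acts as $N\mathbf{P}_\ell+\sigma^2\mathbf{I}$, and on its orthogonal complement as $\sigma^2\mathbf{I}$, giving $(\mathbf{Y}_\ell\mathbf{Y}_\ell^T)^{-1}\approx\mathbf{Z}_\ell(N\mathbf{P}_\ell+\sigma^2\mathbf{I})^{-1}\mathbf{Z}_\ell^T+\sigma^{-2}(\mathbf{I}-\mathbf{Z}_\ell\mathbf{Z}_\ell^T)$. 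Inserting this together with $\mathbf{Y}_\ell$ into $\mathbf{Y}_\ell^T(\mathbf{Y}_\ell\mathbf{Y}_\ell^T)^{-1}\mathbf{Y}_\ell$ and again discarding noise cross terms, the complement part annihilates the signal since $(\mathbf{I}-\mathbf{Z}_\ell\mathbf{Z}_\ell^T)\mathbf{Z}_\ell=\mathbf{0}$, and what survives is $\mathbf{C}\,[\,N\mathbf{P}_\ell(N\mathbf{P}_\ell+\sigma^2\mathbf{I})^{-1}\,]\,\mathbf{C}^T$. The bracketed diagonal matrix has $k$-th entry $Np_{k\ell}/(Np_{k\ell}+\sigma^2)$, where $p_{k\ell}=\mathbf{P}_\ell(k,k)$; identifying the per-user received SNR as $\gamma_{k\ell}=Np_{k\ell}/\sigma^2$ shows this is exactly $r_{k\ell}=\gamma_{k\ell}/(\gamma_{k\ell}+1)$. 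Hence $\mathbf{Y}_\ell^T(\mathbf{Y}_\ell\mathbf{Y}_\ell^T)^{-1}\mathbf{Y}_\ell\approx\mathbf{C}\mathbf{R}_\ell\mathbf{C}^T$ with $\mathbf{R}_\ell=\mathrm{diag}(r_{1\ell},\dots,r_{K_s\ell})$.

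Summing over the $L$ views then gives $\mathbf{A}\approx\mathbf{C}\mathbf{D}\mathbf{C}^T$, where $\mathbf{D}=\mathrm{diag}(\eta_1,\dots,\eta_{K_s})$ collects the effective SNRs $\eta_k=\sum_\ell r_{k\ell}$. Because $\mathbf{C}$ has approximately orthonormal columns, this is precisely an eigendecomposition: the nonzero eigenvalues of $\mathbf{A}$ are the $\eta_k$ and the associated eigenvectors are the corresponding columns of $\mathbf{C}=\mathbf{X}/\sqrt{N}$. By (AS2) every cell-edge user has strictly larger $\eta$ than every cell-center user, and there are exactly $K_c=\sum_j K_{e_j}$ cell-edge users, so the $K_c$ dominant eigenvectors of $\mathbf{A}$ span exactly the column space of $\mathbf{X}_c$. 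The optimal $\mathbf{G}^\star$ of~\eqref{MAXVAR} is any orthonormal basis of this dominant eigenspace, whence $\mathbf{G}^\star\approxeq\mathbf{X}_c\mathbf{P}$ for some nonsingular $K_c\times K_c$ matrix $\mathbf{P}$, which absorbs the $1/\sqrt{N}$ scaling and the rotational ambiguity present when the top $\eta_k$ are not distinct.

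The main obstacle I anticipate is making the two ``$\approx$'' steps involving the noise rigorous: controlling the signal--noise cross terms and the deviations of $\mathbf{C}^T\mathbf{C}$, $\mathbf{Z}_\ell^T\mathbf{Z}_\ell$ and $\mathbf{N}_\ell\mathbf{N}_\ell^T$ from their idealized limits, and then propagating these perturbations through a matrix inverse into a guarantee on the top-$K_c$ eigenspace. A clean way to close this gap would be to invoke a Davis--Kahan-type $\sin\Theta$ bound, which converts the operator-norm error $\|\mathbf{A}-\mathbf{C}\mathbf{D}\mathbf{C}^T\|$ into a subspace-angle bound, provided the eigenvalue gap guaranteed by the strict inequality in (AS2) — namely $\min_i\eta_i-\max_j\eta_j>0$ over cell-edge $i$ and cell-center $j$ — dominates that error. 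Establishing the operator-norm error bound, and hence quantifying the required SNR separation, is the quantitative heart of the argument; everything else is the bookkeeping sketched above.
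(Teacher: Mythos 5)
Your derivation tracks the paper's own proof quite closely up to one consequential omission. Like the paper, you reduce the problem to the eigenstructure of ${\bf A}_\ell = {\bf Y}_\ell^T({\bf Y}_\ell{\bf Y}_\ell^T)^{-1}{\bf Y}_\ell$, invert the Gram matrix using the near-orthonormality of ${\bf Z}_\ell$ (your explicit block inversion is equivalent to the paper's Woodbury step), and land on the signal contribution ${\bf C}{\bf D}_\ell{\bf C}^T$ with ${\bf D}_\ell = \mathrm{diag}(r_{1\ell},\dots,r_{K_s\ell})$. The gap is that when you expand ${\bf Y}_\ell^T(\cdot){\bf Y}_\ell$ with ${\bf Y}_\ell = {\bf S}_\ell + {\bf N}_\ell$, you discard not only the signal--noise cross terms (which is justified, and which the paper also argues away via the $\boldsymbol{\delta}_\ell$ analysis) but also the pure noise quadratic term ${\bf N}_\ell^T({\bf Y}_\ell{\bf Y}_\ell^T)^{-1}{\bf N}_\ell$. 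That term is \emph{not} negligible: using your own inverse it is approximately $\sigma^{-2}{\bf N}_\ell^T{\bf N}_\ell$ on the complement of $\mathrm{range}({\bf Z}_\ell)$, i.e.\ (in the paper's notation) ${\bf U}_\ell{\bf U}_\ell^H$ with ${\bf U}_\ell = {\bf N}_\ell^H/\sigma$, a rank-$M_\ell$ near-projection whose nonzero eigenvalues are all $\approx 1$ --- the same order as the signal eigenvalues $\eta_k = \sum_\ell r_{k\ell} \in [0,L]$. The paper's equation~\eqref{IC9} therefore reads ${\bf A} \approx {\bf X}{\bf D}{\bf X}^T + \sum_\ell {\bf U}_\ell{\bf U}_\ell^H$, not ${\bf C}{\bf D}{\bf C}^T$ alone.

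This matters for the conclusion. With the noise term present, ${\bf A}$ has a noise eigenspace of dimension $\sum_\ell M_\ell$ (typically far larger than $K_c$) at eigenvalue $\approx 1$, so the cell-edge directions occupy the top-$K_c$ eigenspace only if $\eta_i > 1$ for every cell-edge user $i$, in addition to $\eta_i > \eta_j$ for every cell-center user $j$. Assumption (AS2) guarantees only the latter; the paper closes the gap by additionally assuming $r_{k\ell} > 0.5$ for each cell-edge user at its three nearest BSs, which forces $\eta_i > 1.5 > 1$. Your argument, having dropped the noise term, concludes from (AS2) alone and would wrongly certify recovery even when a cell-edge user's effective SNR falls below the noise eigenvalue. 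Reinstating ${\bf U}_\ell{\bf U}_\ell^H$, carrying it into the joint eigendecomposition ${\bf A} \approx {\bf V}{\bf \Sigma}{\bf V}^H$ with ${\bf V} = [{\bf C},{\bf U}_1,\dots,{\bf U}_L]$, and adding the eigenvalue-gap condition against the unit noise level would align your proof with the paper's; your closing remark about a Davis--Kahan bound is a sensible way to make either version rigorous, and is not something the paper itself carries out.
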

\begin{proof}
	The proof is relegated to the Appendix.
\end{proof}
We have thus showed that MAXVAR GCCA identifies the \emph{range space} of the cell-edge users signals, under realistic conditions. We will next show how the original signals ${\bf X}_c$ can be unraveled from their range space, by exploiting their constellation/modulation structure. 
\subsection{RACMA Stage}\label{RACMA}
Given the subspace $ {\bf G}^* $, the problem of recovering the user signals  $ {\bf X}_c $ can then be posed as
\begin{subequations}\label{CSI2}
	\begin{align}
	&\underset{{\bf X}_c,{\bf F}}{\min}~ \| {\bf G}^{\star} - {{\bf X}}_c{\bf F} \|_F^2 \\
	&\text{s.t.} \quad {\bf X}_c{(i,j)} \in \{\pm 1\}
	\end{align}
\end{subequations}
where $ {\bf X}_c{(i,j)}$ represents the $(i,j)$-th entry of the matrix ${\bf X}_c$, for $i = 1, \cdots, N$ and $j = 1, \cdots, K_s$. Although problem~\eqref{CSI2} is known to be NP-Hard even if ${\bf F}$ is known, the  Real Analytical Constant Modulus Algorithm (RACMA) developed by van der Veen ~\cite{van1997analytical} provides a good algebraic solution which comes with certain identifiability guarantees. In particular, RACMA transforms~\eqref{CSI2} to a generalized eigenvalue problem. While the obtained solution is subject to both sign and permutation ambiguities, both of them can be resolved in practice by simply matching the preamble of each estimated signal with the identification sequence that is known {\it a priori} at the serving BS.

It is worth emphasizing that the proposed end-to-end detector of the cell-edge user signals only requires solving two generalized eigenvalue problems. Therefore, the overall computational complexity of our proposed method is dominated by the complexity of solving two generalized eigenvalue problems. This renders our approach favorable for practical implementation. 

We also point out that our proposed method works for other modulation schemes and even for analog signals. It is obvious that GCCA (first stage) can identify the common subspace irrespective of the modulation of the cell-edge user signals. The second stage exploits knowledge of the modulation to unravel the constituent signals from their range space. We used (R)ACMA~\cite{van1997analytical} for BPSK signals, but we can also use related methods, such as ACMA for higher-order PSK or even analog phase or frequency-modulated signals, and iterative algorithms such as SIC-ILS ~\cite{li2000blind} for higher-order QAM, which can also exploit Forward Error Control (FEC) codes to further improve the decoding accuracy. 

\subsection{{Choosing the common subspace dimension}}\label{UC}
Recall that in our analysis in the Appendix, to differentiate between cell-center users and cell-edge users, we assumed that cell-center users are dropped up to certain distance from their serving BS (see Remark 3). However, if all users are randomly dropped throughout the cells, then it is not obvious how to differentiate if a user has to be treated as a cell-center or a cell-edge user.
In other words, how to determine the common subspace dimension if the cell-center users are fully scattered within their cell. Note that underestimating the common subspace dimension can naturally lead to a performance degradation as we will see in the simulation section. To overcome such an issue, we propose a GCCA-based algorithm that can accurately estimate the number of cell-edge users (common subspace dimension), and hence, we can classify whether a user is cell-center or cell-edge. 

Exploiting the fact that a component that is common to three views should also be common to each pair of the three views, the common subspace dimension can be accurately estimated via checking the mean of correlation coefficients computed from the canonical components of each pair. Recall that $K_c \leq \min\{2M_\ell,N\}$, where $K_c$ is the number of canonical pairs that can be extracted using GCCA. Upon solving problem~\eqref{MAXVAR} and obtaining the solutions $\{{\bf Q}^\star_\ell\}_{\ell=1}^{3}$, we define the $i$-th correlation coefficient between views $j$ and $\ell$ as
\begin{equation}
	\rho^{(i)}_{\ell j} = {\bf Q}^T_\ell(:,i)\overline{\bf Y}_\ell\overline{\bf Y}^T_j{\bf Q}_j(:,i)
\end{equation}
$\forall \ell,j \in \mathcal{L}$ and $j > \ell$ and $i \in \{1,\cdots,K_c\}$. Afterwards, we compute the $i$-th average correlation coefficient as
$$ \rho^{(i)}_\text{avg} = \frac{1}{3}(\rho^{(i)}_{1 2}+\rho^{(i)}_{13}+\rho^{(i)}_{23}).$$
Then, we decide that the $i$-th canonical components (${\bf Q}_1(:,i),{\bf Q}_2(:,i),{\bf Q}_3(:,i)$) extract a common signal if $\rho^{(i)}_\text{avg}$ is greater than a certain threshold -- a reasonable choice of $\rho_\text{th}$ is $ 0.5 $.

\section{Experimental Results}\label{Simu}
In this section, we use realistic numerical simulations to assess the performance of the proposed GCCA approach. We consider a scenario with $ L = 4 $ hexagonal cells, each of radius $ \text{R} = 600 $ meters. The locations of cell-center users served by each BS are drawn uniformly at random within a distance less than $ d = 0.4\text{R} $ from their serving BS, unless stated otherwise. Cell-edge users, on the other hand, are located around the edges between base stations at distance between $0.95\text{R}$ and $1.05\text{R}$. Fig.~\ref{Simu_Scenario1} shows one simulated scenario where cell-center users and cell-edge users are colored in red and green triangles, respectively. The transmitted power of all users was set to $ 25 $ dBm  while the transmitted sequence length $ N $ was fixed to $ 800 $.  
All results  were averaged over $ 500 $ Monte Carlo trials. Additive white Gaussian noise was used with variance $ \sigma^2 $ so that the SNR is $ P_e / \sigma^2 $, where $ P_e $ is the average received power of cell-edge users. In fact, this enables us to evaluate SNR values required for cell-edge users to achieve a specific BER.
The uplink channel between the $k$-th user in the $j$-th cell and the $\ell$-th BS is modeled as 
\begin{equation}
	{\bf h}_{\ell kj}^H = \sqrt{\frac{1}{M}} \sum\limits_{n=1}^{N_p} \sqrt{\alpha_{\ell kj}^{(n)}}{\bf a}_r(\phi^{(n)})^H
\end{equation}
where $ N_p $ is the number of paths between the $ \ell $-th BS and the $ k $-th user in cell $ j $, $\forall \{\ell,j\} \in \mathcal{L} $ and $k \in [K_s]$. To compute the path gain, $ \alpha_{\ell kj}^{(n)}$, we use the path-loss model of the urban macro (UMa) scenario from Table $7.4.1-1$ in the 3GPP $ 38.901 $ standard, with the carrier frequency set to $2$ GHz,  $ \forall n,\ell,j,k $. Furthermore, all cell-center users are allowed to possibly have a line of sight (LOS) component to their serving BS according to the LOS probability expression for the UMa scenario in Table $7.4.2-1$ in the 3GPP $ 38.901 $ standard; however, all cell-edge users have only non-LOS components. The term $ {\bf a}_r(.) $ is the array response vector at the BS, and $ \phi^{(n)} \sim \mathcal{U}[-\pi,\pi]$ denotes the azimuth angle of arrival of the $ n $-th path. Assuming the BS is equipped with a uniform linear array, then
\begin{equation}
	{\bf a}_r(\theta) = [1,e^{ikd\cos(\phi)}, \cdots,e^{ikd(M-1)\cos(\phi)}]^T,
\end{equation}
where $i=\sqrt{-1}$, $ k = 2\pi/\lambda$, $ \lambda $ is the carrier wavelength and $ d = \lambda/2$ is the spacing between antenna elements.

To assess the efficacy of our approach, we implement the following approaches and use them as performance baselines.
\begin{itemize}
	\item {\bf MMSE / ZF with channel estimation}: the channels of all users are estimated via transmitting sequences of orthogonal pilots with length of $250$ each. Then, both the MMSE and ZF detectors are employed to decode the cell-edge user signals using their estimated channels.
	
	\item {\bf MMSE / ZF SIC RACMA with channel estimation}: the channels of the cell-center users associated with each BS are estimated first. Then, we use both the ZF-SIC and MMSE-SIC detectors to decode, and then subtract the re-encoded signals of the cell-center users at their serving BS. Afterwards, we apply RACMA~\cite{van1997analytical} on the residual signal to recover the cell-edge user signals. To guarantee fairness, since we assume BS cooperation, we feed RACMA with the residual signals from all BSs simultaneously.
	
	\item {\bf MMSE / ZF SIC RACMA Perfect}: similar to the previous baseline but with perfect knowledge of the cell-center user channels at their serving BS.
	
	\item {\bf CCA RACMA combined}: we use CCA to recover the range space of the cell-edge users from the nearest two BSs~\cite{salah2019}. Then, we apply RACMA to recover the cell-edge user signals from the resulting subspace.
\end{itemize}
\begin{figure}
	\centering
	\includegraphics[width=3in]{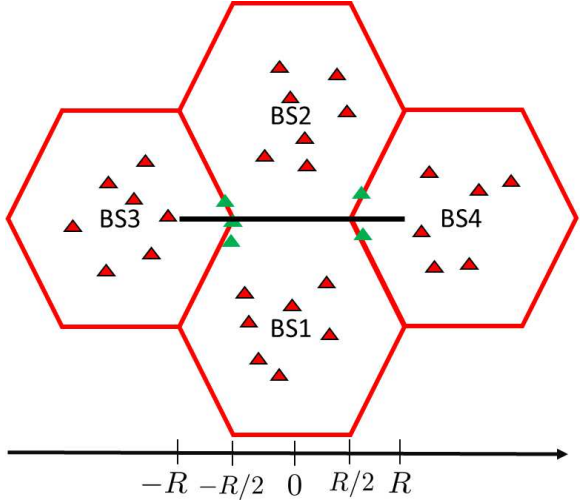}
	\caption{Snapshot from the simulated scenario}
	\label{Simu_Scenario1}
\end{figure}

In the first experiment, we consider a setup with $M_\ell = 12$ antennas and $K_\ell = 8$ single transmit antenna users, $~\forall \; \ell \in \mathcal{L}$. Considering the scenario shown in Fig. \ref{Simu_Scenario1}, we varied the x-location of one cell-edge user on the black edge between BS1 and BS2 from $x = -\text{R}$ to $x = \text{R}$,  while the locations of all other users are kept fixed during the experiment. At each value of $x$, we report the BER of the proposed GCCA approach using the three closest BSs, GCCA using all BSs, CCA using the closest two BSs and ZF-SIC RACMA with perfect cell-center users channels using the best BS. Note that when the user is located at $x = -\text{R}/2$, its received SNR is approximately equal to $ 3 $dB at BSs 1,2 and 3, according to our adopted path-loss model. As the user's location shifts towards the center ($x = 0$) of the black edge, its received SNR increases (as path-loss decreases) at BSs $1$, $2$, and $4$ while it decreases at BS $3$. When this user passes the center of the edge, the received SNR decreases again at BSs $1$ and $2$. On the other hand, when the user's location changes towards the very left corner of the black edge ($x = -\text{R}$), its received SNR automatically increases at BS $3$, while it decreases at BSs $1,2,4$.

As Fig.~\ref{Res_simscenario1} depicts, when the user is located at $x = -\text{R}/2$ (the user is at relatively equal distance from three BSs), the proposed GCCA using the three left BSs ($1$, $2$, and $3$) attains the minimum BER compared to GCCA using all BSs, CCA using the two closest BSs and ZF-SIC using the best BS. Similarly, when the user is located at $x = \text{R}/2$, the joint detection using the three BSs $1$, $2$, and $4$ gives the best performance. When the user is close to $x = \pm \text{R}/2$, GCCA using the three nearest BSs attains more than order of magnitude reduction in the BER relative to CCA RACMA and much more relative to ZF-SIC RACMA. 

On the other hand, as the location of the user moves towards the center of the cell-edge $(x = 0)$, the detection performance of CCA using the two BSs $1$, $2$ improves gradually until it reaches its best at the origin (minimum path-loss and maximum received SNR), and then it decreases again as shown in Fig.~\ref{Res_simscenario1}. This happens as the received SNR of the user's signal at BS $1$ and BS $2$ becomes higher, and considerable discrepancy becomes evident between the received SNR at BSs $1$ and $2$ and at BS $3$. Note that GCCA using the best three BSs always yields the minimum BER when the user's location is in the interval $[-\text{R}/2,\text{R}/2]$.

\begin{figure}
	\centering
	\includegraphics[width=3in]{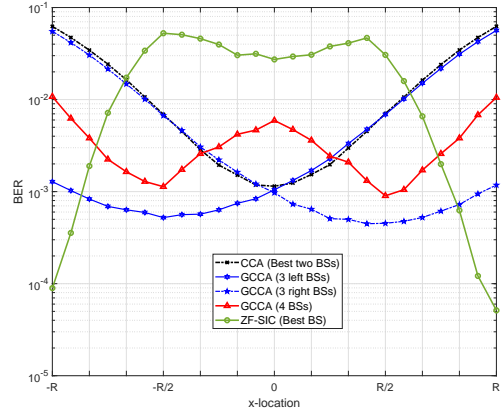}
	\caption{BER vs. cell-edge user location: GCCA using 3 closest BSs is always better}
	\label{Res_simscenario1}
\end{figure}

When the cell-edge user location becomes close to either BS $3$ or BS $4$, i.e., $x \approx \pm \text{R}$, Fig.~\ref{Res_simscenario1} shows that using ZF-SIC RACMA at the nearest BS achieves the best detection performance among all other methods that use joint detection. This can be attributed to the fact that this user is no longer a ``common'' user - there is a large discrepancy among the received SNR at BS $3$ (very high) and at BSs $1,2,$ and $4$ (very low), and hence, the power imbalance severely affects the detection performance of the (G)CCA based approaches. This observation suggests that depending on the user's type (center or edge), one should use either the closest BS or the three nearest BSs to detect the user. In other words, if a user is relatively close to any BS, then this user's signal received power is high at this BS and very weak at all other BSs, and hence, it makes sense to decode this user's signal from the nearest BS. However, if a user is close to the edge between cells, then this user is {\it common} to multiple BSs and jointly detecting such a user from the three closest BSs using GCCA yields the best detection performance.      

More interestingly, it turns out that adding more BSs does not always improve the performance. For instance, at $x = \pm \text{R}/2$, while GCCA using four BSs attains a comparable performance relative to GCCA with the three nearest BSs, the latter is considerably better as the user moves towards the center $x = 0$. This is because at $x = 0$, the received SNR is very low at both BS $3$ and BS $4$ compared to BS $1$ and BS $2$, and consequently, both views $3$ and $4$ act as two ``noisy'' views that naturally degrade the signal recovery of the cell-edge user. Therefore, one can conclude that from the geometry of the  hexagonal cells, adding more BSs and feeding their received signals to GCCA to recover the common subspace will further degrade the detection performance of cell-edge users as any additional BS (view) will lead to an additional noisy view that severely affects the cell-edge user's signal recovery. In other words, the more views $(L  >  3)$ GCCA uses, the more difficult it becomes to reveal common information from all views simultaneously. Therefore, using the observation that using GCCA with the three closest BSs always yields the minimum BER for cell-edge users, and to further show the effectiveness of our approach under different settings, we will only consider the three BSs scenario, shown in Fig.~\ref{Simscenario2}, for all subsequent experiments. 

\begin{figure}
	\centering
	\includegraphics[width=2.35in]{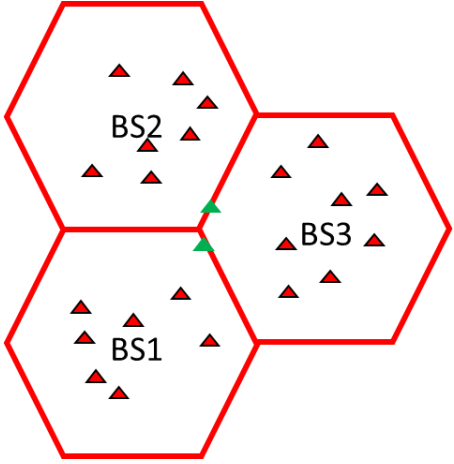}
	\caption{Snapshot from the three BSs simulated scenario}
	\label{Simscenario2}
\end{figure}

We now consider another experiment where we vary the transmitted power of the two cell-edge users from $ 20 $dBm to $ 25 $dBm which corresponds to approximately $0$dB to $5$dB SNR according to the adopted path-loss model,  while the transmitted power of all-center users is fixed to $25$dBm. Drawing different cell-center user locations for each Monte-Carlo realization, we compute the average BER among the two cell-edge users as a function of their transmitted power. Fig.~\ref{BER_SNR_All_Users} shows how GCCA provides significant improvement in the BER compared to all other methods. In particular, GCCA achieves an order of magnitude reduction in the BER compared to MMSE-SIC followed by RACMA (the second best method) which jointly detects the cell-edge user signals using the residual signals from the three BSs. Notice that, MMSE-SIC assumes perfect `oracle' knowledge of the channels of cell-center users at their serving BS. This assumption becomes less realistic when ``cell-center'' users are fully scattered throughout the cell. Although this gives a big advantage to the MMSE-SIC approach, GCCA still provides considerably better detection performance of cell-edge users. Furthermore, one can easily see how the channel estimation errors severely degrade the detection performance of cell-edge users.

	\begin{figure}
		\centering
		\includegraphics[width=3in]{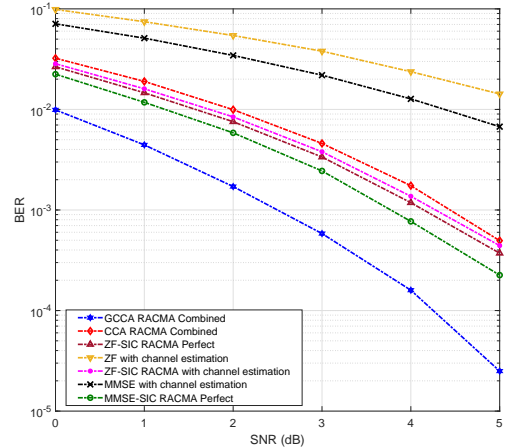}
		\caption{BER vs. SNR of cell-edge users}
		\label{BER_SNR_All_Users}
	\end{figure}%

	\begin{figure}
		\centering
		\includegraphics[width=3in]{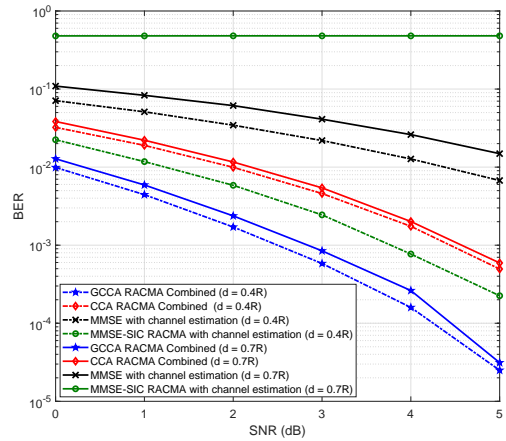}
		\caption{BER vs. SNR of cell-edge users, $d$ denotes the distance at which cell-center users are randomly dropped up to.}
		\label{Inter-cell interf}
	\end{figure}


Additionally, we simulate a more realistic scenario where cell-center users are almost fully scattered in their cell. In particular, cell-center users are dropped up to $d = 0.7 \text{R}$ from their serving BS, thereby cell-edge users are experiencing more aggressive inter-cell interference compared to the case where $d = 0.4 \text{R}$. Note that the MMSE-SIC RACMA is implemented using estimates of the cell-center users' channels instead of assuming perfect knowledge of their channels because that is hard to attain even approximately when cell-center users are fully scattered. Fig.~\ref{Inter-cell interf} depicts the inter-cell interference effect on the detection performance achieved by different methods. It is obvious that the MMSE-SIC RACMA completely fails at $d = 0.7 \text{R}$ compared to $d = 0.4 \text{R}$. On the other hand, both GCCA and CCA have a slight degradation in their performance, which in turn reflects the efficacy of both methods that principally rely on recovering the subspace of the ``equipowered'' users. Note that GCCA with three BSs still attains an outstanding detection performance compared to the other methods under this realistic scenario.      
	
\begin{figure}
	\centering
	\includegraphics[width=80mm,scale=2]{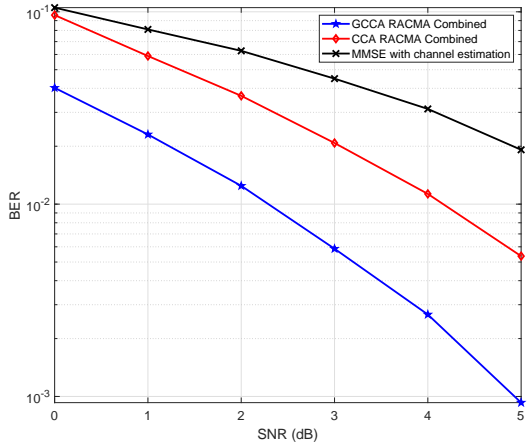}
	\caption{BER vs. SNR of cell-edge users for different number of users}
	\label{BER_SNR_users}
\end{figure}
We carry out another experiment in a more dense scenario where $K_\ell = 16$ and $M_\ell = 30$. Further, cell-center users are dropped up to $d = 0.8R$. We report the BER of cell-edge users versus the SNR. Although the results in Fig.~\ref{BER_SNR_users} show that the detection performance of all methods significantly degrades compared to the one in the previous experiment where $K_\ell = 8$, our proposed approach still can attain acceptable performance by achieving $1e-3$ BER at $5$dB. Notice that doubling the number of users and allowing them to be more scattered naturally leads to greater corruption in the estimated common subspace, and hence, the degradation in the detection performance obtained by the proposed method is expected. 

Adding more BSs with $K$ users each might be expected to severely affect the performance which is true in general. However since, in principle, our approach recovers the subspace containing the ``equipowered'' user signals, adding more users in the far cells (not served by the three closest BSs) can slightly affect the cell-edge detection performance as we observed from our simulations. For instance, looking at Fig.~\ref{Simu_Scenario1}, increasing the number of users served by BS $4$ does not affect the BER obtained when the cell-edge user is located at $x = -\text {R}/2$ if GCCA is used with BSs $1,2$ and $3$. However, increasing the number of cell-center users can only degrade the performance, as shown in Fig. \ref{BER_SNR_users}, when these users are served by any of the BSs used for detecting the cell-edge users via GCCA. This can be attributed to the fact that there is a chance that some users could be included in the common part, and hence, underestimating the dimension of the common subspace can degrade performance.
		
	\begin{figure}
		\centering
		\includegraphics[width=80mm,scale=2]{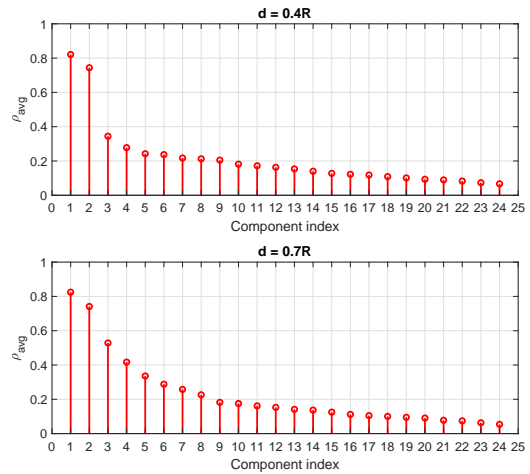}
		\caption{Average correlation coefficient of all possible extracted components via GCCA}
		\label{corr_comp}	 	
	\end{figure}

{Finally, we test the proposed algorithm used to detect the number of cell edge users (i.e., the common subspace dimension)}. We consider a setup with $M_\ell = 12$, $K_\ell = 8$, and $\text{SNR} \approx 3\text{dB}$. Note that since $2M_\ell < N$, we can find up to $2M_\ell$ canonical components. Fig.~\ref{corr_comp} shows the average correlation coefficient computed at the $i$-th extracted component, for $i = 1,\cdots,24$, for two different drop/scatter patterns for the cell-center users. It is obvious that when $d = 0.4\text{R}$, there is a significant gap between the average correlation coefficient of the first two components and the rest of the components. In particular, the average correlation coefficient of the first two components is almost $0.8$ while all the rest are less than $0.3$. Thus, one can decide that there exist only two cell-edge users in this case. On the other hand, at $d = 0.7\text{R}$, the value of the average correlation coefficient slightly increases for some of the components. For example, the average correlation coefficient of the third component now jumps to $0.52$ which means that there is one more user that can be considered as common user. However, considering only the first two components is enough to reliably recover the signals of the two cell-edge users as shown in Fig. ~\ref{Inter-cell interf}, where the detection performance was sightly affected by increasing $d$ from $0.4\text{R}$ to $0.7\text{R}$.     

\section{Conclusions}\label{Conc}
We studied the problem of cell-edge user signal detection in the uplink of a multi-cell, multi-user MIMO system, with the aim of designing a detector that can reliably demodulate cell-edge user signals in the presence of strong intra-cell  interference from cell-center users, without resorting to power control or/and scheduling algorithms that throttle the cell-center user rates. We proposed a GCCA-based approach that leverages {\em selective} base station cooperation to reliably identify the common subspace containing cell-edge user signals at low SNR, without even knowing their channels. Then, we used an efficient analytical method (RACMA) that guarantees the identifiability of binary signals from well-conditioned mixtures to separate the cell-edge user signals from the resulting subspace. 

We presented theoretical results to prove that under an idealized scenario, the proposed GCCA-based approach recovers the subspace containing the cell-edge user signals. Furthermore, we showed through an elegant analysis that under realistic assumptions on the inter-cell interference and the SNR of the cell-edge users, the common subspace recovery is guaranteed via GCCA. 
Simulations using a {\em realistic} propagation and system model were carried out to show the superiority of the proposed learning-based method over the prevailing state-of-the-art methods. In particular, our proposed approach attained an order of magnitude reduction in the BER compared to other multi-user detection methods that assume perfect knowledge of the channels of the cell-center users. Furthermore, our experimental results evaluated the cell-edge user detection performance as a function of the number of cooperating BSs, and revealed that using the three closest BSs is always optimal in this regard. This was not obvious {\em a priori}, as intuition may have suggested that two or even more than three BSs might be preferable in certain cases. 

\appendices
\section{Proof of Proposition 1}\label{AP}

In this section, we will show that the principal $K_c$ eigenvectors of the matrix ${\bf A} = \sum_{\ell = 1}^{L}{\bf Y}_\ell^H({\bf Y}_\ell{\bf Y}_\ell^H)^{-1}{\bf Y}_\ell$ is approximately the column space of the cell-edge user signals. We first write the auto-correlation matrix $ {\bf Y}_\ell{\bf Y}_\ell^H $ as
\begin{equation}\label{IC3}
	{\bf Y}_\ell{\bf Y}_\ell^H = {\bf Z}_{\ell}{\bf P}_\ell{\bf Z}_\ell^H + \sigma^2 {\bf I}       
\end{equation}
where we have exploited the facts that, at $ N > K_s $, $ {\bf X}^T{\bf X}/N \approx {\bf I}_{K_s}$ and $ \mathbb{E}[{\bf N}_\ell{\bf N}^H_\ell] = \sigma^2 {\bf I}_{N_\ell}$. 
Define the diagonal matrix $ {\bf \Gamma}_\ell := {\bf P}_\ell/\sigma^2 \in \mathbb{R}^{K_s \times K_s}$ that contains the received SNR of each user at the $ \ell $-th BS, and $ {\bf U}_\ell := {\bf N}^H_\ell/\sigma \in \mathbb{C}^{N \times M_\ell}$, and the matrix $ {\bf A}_\ell := {\bf Y}_\ell^H({\bf Y}_\ell{\bf Y}_\ell^H)^{-1}{\bf Y}_\ell $. Then, by direct substitution of~\eqref{IC2} and~\eqref{IC3} in the matrix $ {\bf A}_\ell $, we obtain
\begin{align}\label{IC4}
	\begin{aligned}
		{\bf A}_\ell &=  {\bf X}{\bf \Gamma}_\ell ^{1/2}{\bf Z}_\ell^H ({\bf Z}_{\ell}{\bf \Gamma}_\ell{\bf Z}_\ell^H + {\bf I})^{-1}{\bf Z}_{\ell}{\bf \Gamma}_\ell^{1/2}{\bf X}^T \\ 
		&\quad+ {\bf U}_\ell({\bf Z}_{\ell}{\bf \Gamma}_\ell{\bf Z}_\ell^H +  {\bf I})^{-1}{\bf U}^H_\ell + \boldsymbol{\delta}_\ell + \boldsymbol{\delta}_\ell^H
	\end{aligned}
\end{align}
where $\boldsymbol{\delta}_\ell := {\bf U}_\ell({\bf Z}_{\ell}{\bf \Gamma}_\ell{\bf Z}_\ell^H +  {\bf I})^{-1}{\bf Z}_{\ell}{\bf \Gamma}_\ell^{1/2}{\bf X}^T $. By applying the Woodbury matrix identity on the matrix $ {\bf C}_\ell := ({\bf Z}_{\ell}{\bf \Gamma}_\ell{\bf Z}_\ell^H +  {\bf I})^{-1} $, we get
\begin{subequations}\label{IC5}
	\begin{align}
		{\bf C}_{\ell} &=({\bf Z}_{\ell}{\bf \Gamma}_\ell{\bf Z}_\ell^H +  {\bf I})^{-1} \\
		&={\bf I} - {\bf Z}_\ell({\bf \Gamma_\ell}^{-1}+{\bf Z}_\ell^H{\bf Z}_\ell)^{-1}{\bf Z}_\ell^H\\
		&\approxeq  {\bf I}  - {\bf Z}_\ell{({\bf \Gamma_\ell}}^{-1}+{\bf I})^{-1}{\bf Z}_\ell^H\\
		&= {\bf I} - {\bf Z}_\ell{\bf D}_\ell{\bf Z}_\ell^H
	\end{align}
\end{subequations}
where $ {\bf D}_\ell \approxeq {\bf \Gamma}_\ell({\bf \Gamma}_\ell + {\bf I})^{-1} $.
It now follows that the first term in~\eqref{IC4} can be expressed as
\begin{subequations}\label{IC6}
	\begin{align}
		{\bf T}_{\ell}^{(1)} &= {\bf X}{\bf \Gamma}_\ell ^{1/2}{\bf Z}_\ell^H {\bf C}_\ell{\bf Z}_\ell{\bf \Gamma}_\ell^{1/2}{\bf X}^T \\
		&\approxeq {\bf X}{\bf \Gamma}_\ell ^{1/2}({\bf I}-{\bf D}_\ell){\bf \Gamma}^{1/2}_\ell{\bf X}^T  \\
		&= {\bf X}{\bf D}_\ell{\bf X}^T
	\end{align}
\end{subequations}
On the other hand, the second term in~\eqref{IC4} can be written as
\begin{subequations}\label{IC}
	\begin{align}
		{\bf T}_{\ell}^{(2)} &= {\bf U}_\ell{\bf C}_\ell{\bf U}^H_\ell \\
		&= {\bf U}_\ell{\bf U}^H_\ell - {\bf U}_\ell{\bf Z}_\ell{\bf D}_\ell{\bf Z}^H_\ell{\bf U}^H_\ell 
	\end{align}
\end{subequations}
Given that $ {\bf U}_\ell $ contains i.i.d entries with zero mean and variance $ 1/N $ while $ {\bf Z}_\ell $ contains i.i.d entries with zero mean and variance $ 1/M_\ell $,  both $ {\bf Z}_\ell  $ and $ {\bf U}_\ell $ are uncorrelated, and ${\bf D}_\ell \preceq {\bf I}_{K_s} $ ($ \preceq $ interpreted element-wise), then it follows that the summation in~(\ref{IC}b) will be dominated by the matrix ${\bf U}_\ell{\bf U}^{H}_\ell$. Therefore, $ {\bf T}_{\ell}^{(2)} $ can be approximately written as
\begin{align}
	\label{IC7}
	{\bf T}_{\ell}^{(2)} \approxeq {\bf U}_\ell{\bf U}^H_\ell
\end{align} 
Then, the expression $\boldsymbol{\delta}_\ell$ can written as
\begin{subequations}\label{ICC}
	\begin{align}
		\boldsymbol{\delta}_\ell & = {\bf U}_\ell {\bf C}_\ell {\bf Z}_\ell {\bf \Gamma}_\ell^{1/2}{\bf X}^T\\
		&\approxeq {\bf U}_\ell {\bf Z}_\ell({\bf I} - {\bf D}_\ell) {\bf \Gamma}_\ell^{1/2}{\bf X}^T\\
		& = {\bf U}_\ell {\bf Z}_\ell (\boldsymbol{\Gamma}_\ell + {\bf I})^{-1} \boldsymbol{\Gamma}_\ell^{1/2} {\bf X}^T
	\end{align}
\end{subequations}
By summing~\eqref{IC7} and (\ref{ICC}c), we get
\begin{align}\label{IC13}
	{\bf T}_{\ell}^{(2)} + \boldsymbol{\delta}_\ell & = {\bf U}_\ell ({\bf U}^H_\ell+{\bf Z}_\ell (\boldsymbol{\Gamma}_\ell + {\bf I})^{-1} \boldsymbol{\Gamma}_\ell^{1/2} {\bf X}^T)
\end{align}
where the summation on the right hand side of~\eqref{IC13} is nothing but adding two Gaussian matrices; one with variance $ \sigma_1^2 = 1/N $ and the other with variance $ \sigma_2^2 = \frac{1}{N}\frac{1}{M_\ell}\sum_{i = 1}^{K_s} \frac{\sqrt{\gamma_{i\ell}}}{(\gamma_{i\ell}+1)^2} $, where it can be easily seen that, even for modest $M_\ell $, $ \sigma^2_2 << \sigma^2_1$. Therefore, the summation in~\eqref{IC13} will be dominated by $ {\bf T}_{\ell}^{(2)} $. Thus, combining~\eqref{IC6} with~\eqref{IC7},~\eqref{IC4} can be written as
\begin{equation}\label{IC8}
	{\bf A}_\ell \approxeq {\bf X}{\bf D}_\ell{\bf X}^T + {\bf U}_\ell{\bf U}^H_\ell
\end{equation}
Recall that the optimal solution $ {\bf G}^* $ of \eqref{MAXVAR} is the $ K_c $ principal eigenvectors of the following matrix
\begin{subequations}\label{IC9}
	\begin{align}
		{\bf A} &= \sum_{\ell = 1}^L {\bf A}_\ell\\
		&= {\bf X}{\bf D}{\bf X}^T + \sum_{\ell = 1}^{L} {\bf U}_\ell{\bf U}^H_\ell
	\end{align}
\end{subequations}
where $ {\bf D }:= \sum_{\ell = 1}^{L}{\bf D}_\ell \in \mathbb{R}^{K_s \times K_s}$. By defining $ {\bf V}= [{\bf X},{\bf U}_1,\cdots,{\bf U}_L] \in \mathbb{C}^{N \times (Ks+\sum_{\ell=1}^{L}M_\ell)}$ and $ {\bf \Sigma} := \text{Diag}({\bf D},{\bf I}_{M_1},\cdots,{\bf I}_{M_L}) $,~\eqref{IC9} can be equivalently expressed as
\begin{equation}\label{IC10}
	{\bf A} = {\bf V \Sigma V}^H 
\end{equation}
Since $ {\bf X}^T{\bf X} \approx {\bf I}_{K_s} $ and by definition $ {\bf U}^H_\ell{\bf U}_\ell \approx {\bf I}_{M_\ell} \forall \ell $, then $ {\bf V}^H{\bf V} \approx {\bf I}_{K_s + M_s} $, it can be readily seen that the right hand side of~\eqref{IC10} is nothing but the eigendecomposition of the matrix $ {\bf A} $. Recall that the $ i $-th diagonal entry of the matrix $ {\bf D} $ is given by
\begin{equation}\label{IC11}
	{\bf D}_{(i,i)} = \sum_{\ell = 1}^{L} r_{i\ell}
\end{equation}
From (AS2) and by assuming that the received signal power of the $k$-th cell-edge user at the $\ell$-th BS is few dBs above the noise floor, i.e., $r_{k\ell} > 0.5~\forall k =1,\cdots,K_c~\text{and}~\ell = 1,2,3$, the eigenspace of the $ K_c $ principal components of the matrix $ {\bf A} $ is given by 
\begin{equation}\label{IC12}
	{\bf G}^{*} = {\bf X}_c{\bf P}
\end{equation}
where ${\bf P}$ is any $K_c \times K_c$ non-singular matrix.

\bibliographystyle{IEEEtran}
\bibliography{IEEEabrv,refrences}

\begin{thebibliography}{10}
\providecommand{\url}[1]{#1}
\csname url@samestyle\endcsname
\providecommand{\newblock}{\relax}
\providecommand{\bibinfo}[2]{#2}
\providecommand{\BIBentrySTDinterwordspacing}{\spaceskip=0pt\relax}
\providecommand{\BIBentryALTinterwordstretchfactor}{4}
\providecommand{\BIBentryALTinterwordspacing}{\spaceskip=\fontdimen2\font plus
\BIBentryALTinterwordstretchfactor\fontdimen3\font minus
  \fontdimen4\font\relax}
\providecommand{\BIBforeignlanguage}[2]{{%
\expandafter\ifx\csname l@#1\endcsname\relax
\typeout{** WARNING: IEEEtran.bst: No hyphenation pattern has been}%
\typeout{** loaded for the language `#1'. Using the pattern for}%
\typeout{** the default language instead.}%
\else
\language=\csname l@#1\endcsname
\fi
#2}}
\providecommand{\BIBdecl}{\relax}
\BIBdecl

\bibitem{dahlman20134g}
E.~Dahlman, S.~Parkvall, and J.~Skold, \emph{4{G}: {LTE}/{LTE}-advanced for
  mobile broadband}.\hskip 1em plus 0.5em minus 0.4em\relax Cambridge, MA, USA:
  Academic Press, 2013.

\bibitem{andrews2014will}
J.~G. Andrews, S.~Buzzi, W.~Choi, S.~V. Hanly, A.~Lozano, A.~C. Soong, and
  J.~C. Zhang, ``What will 5{G} be?'' \emph{IEEE Journal on Selected Areas in
  Communications}, vol.~32, no.~6, pp. 1065--1082, Jun. 2014.

\bibitem{paulraj2004overview}
A.~J. Paulraj, D.~A. Gore, R.~U. Nabar, and H.~Bolcskei, ``An overview of
  {MIMO} communications-a key to gigabit wireless,'' \emph{Proceedings of the
  IEEE}, vol.~92, no.~2, pp. 198--218, Nov. 2004.

\bibitem{jiang2007multiuser}
M.~Jiang and L.~Hanzo, ``Multiuser {MIMO-OFDM} for next-generation wireless
  systems,'' \emph{Proceedings of the IEEE}, vol.~95, no.~7, pp. 1430--1469,
  Aug. 2007.

\bibitem{you2011cell}
X.~You, D.~Wang, P.~Zhu, and B.~Sheng, ``Cell edge performance of cellular
  mobile systems,'' \emph{IEEE Journal on Selected Areas in Communications},
  vol.~29, no.~6, pp. 1139--1150, Jun. 2011.

\bibitem{khan2016downlink}
M.~H.~A. Khan, J.-G. Chung, and M.~H. Lee, ``Downlink performance of cell edge
  using cooperative {BS} for multicell cellular network,'' \emph{EURASIP
  Journal on Wireless Communications and Networking}, vol. 2016, no.~1, p.~56,
  Feb. 2016.

\bibitem{boudreau2009interference}
G.~Boudreau, J.~Panicker, N.~Guo, R.~Chang, N.~Wang, and S.~Vrzic,
  ``Interference coordination and cancellation for {4G} networks,'' \emph{IEEE
  Communications Magazine}, vol.~47, no.~4, Apr. 2009.

\bibitem{niu2015survey}
Y.~Niu, Y.~Li, D.~Jin, L.~Su, and A.~V. Vasilakos, ``A survey of millimeter
  wave communications (mm{W}ave) for 5{G}: opportunities and challenges,''
  \emph{Wireless Networks}, vol.~21, no.~8, pp. 2657--2676, Apr. 2015.

\bibitem{verdu1998multiuser}
S.~Verdu \emph{et~al.}, \emph{Multiuser detection}.\hskip 1em plus 0.5em minus
  0.4em\relax Cambridge, U.K.: Cambridge University Press, 1998.

\bibitem{chayon2017enhanced}
H.~R. Chayon, K.~B. Dimyati, H.~Ramiah, and A.~W. Reza, ``Enhanced quality of
  service of cell-edge user by extending modified largest weighted delay first
  algorithm in {LTE} networks,'' \emph{Symmetry}, vol.~9, no.~6, p.~81, May
  2017.

\bibitem{gochev2013improving}
H.~Gochev, V.~Poulkov, and G.~Iliev, ``Improving cell edge throughput for {LTE}
  using combined uplink power control,'' \emph{Telecommunication Systems},
  vol.~52, no.~3, pp. 1541--1547, Jun. 2013.

\bibitem{kumar2008throughput}
M.~R. Kumar, S.~Bhashyam, and D.~Jalihal, ``Throughput improvement for
  cell-edge users using selective cooperation in cellular networks,'' in
  \emph{5th IFIP International Conference on Wireless and Optical
  Communications Networks}, Surabaya, Indonesia, May 2008, pp. 1--5.

\bibitem{ramamurthi2009cutting}
B.~Ramamurthi, ``Cutting edge at the cell edge: {C}o-channel interference
  mitigation in emerging broadband wireless systems,'' in \emph{First
  International Communication Systems and Networks and Workshops}, Bangalore,
  India, Jan. 2009, pp. 1--7.

\bibitem{borah2020effect}
J.~Borah, A.~Hussain, and J.~Bora, ``Effect of intercell interference on cell
  boundary users in three-cell and seven-cell {H}e{tCN},'' \emph{International
  Journal of Communication Systems}, vol.~33, no.~5, p. e4257, Nov. 2019.

\bibitem{ghaffar2011interference}
R.~Ghaffar and R.~Knopp, ``Interference suppression strategy for cell-edge
  users in the downlink,'' \emph{IEEE Transactions on Wireless Communications},
  vol.~11, no.~1, pp. 154--165, Jun. 2012.

\bibitem{proakis2001digital}
J.~G. Proakis and M.~Salehi, \emph{Digital communications}.\hskip 1em plus
  0.5em minus 0.4em\relax New York, NY, USA: McGraw-hill New York, 2001,
  vol.~4.

\bibitem{vikalo2003expected}
H.~Vikalo and B.~Hassibi, ``The expected complexity of sphere decoding, part
  {I}: Theory, part {II}: Applications,'' \emph{IEEE Transactions on Signal
  Processing}, vol.~53, no.~8, p. 2819–2834, Aug. 2003.

\bibitem{jalden2004exponential}
J.~Jalden and B.~Ottersten, ``An exponential lower bound on the expected
  complexity of sphere decoding,'' in \emph{IEEE International Conference on
  Acoustics, Speech, and Signal Processing}, vol.~4, Montreal, Canada, May
  2004, pp. iv--iv.

\bibitem{ma2002quasi}
W.-K. Ma, T.~N. Davidson, K.~M. Wong, Z.-Q. Luo, and P.-C. Ching,
  ``Quasi-maximum-likelihood multiuser detection using semi-definite relaxation
  with application to synchronous {CDMA},'' \emph{IEEE Transactions on Signal
  Processing}, vol.~50, no.~4, pp. 912--922, Aug. 2002.

\bibitem{wiesel2005semidefinite}
A.~Wiesel, Y.~C. Eldar, and S.~Shamai, ``Semidefinite relaxation for detection
  of 16-{QAM} signaling in {MIMO} channels,'' \emph{IEEE Signal Processing
  Letters}, vol.~12, no.~9, pp. 653--656, Aug. 2005.

\bibitem{luo2010semidefinite}
Z.-Q. Luo, W.-K. Ma, A.~M.-C. So, Y.~Ye, and S.~Zhang, ``Semidefinite
  relaxation of quadratic optimization problems,'' \emph{IEEE Signal Processing
  Magazine}, vol.~27, no.~3, pp. 20--34, May 2010.

\bibitem{madhow1994mmse}
U.~Madhow and M.~L. Honig, ``{MMSE} interference suppression for
  direct-sequence spread-spectrum {CDMA},'' \emph{IEEE Transactions on
  Communications}, vol.~42, no.~12, pp. 3178--3188, Dec. 1994.

\bibitem{wolniansky1998v}
P.~W. Wolniansky, G.~J. Foschini, G.~D. Golden, and R.~A. Valenzuela,
  ``V-{BLAST}: {A}n architecture for realizing very high data rates over the
  rich-scattering wireless channel,'' in \emph{International Symposium on
  Signals, Systems, and Electronics.}, Sept., 1998, pp. 295--300.

\bibitem{moshavi1996multi}
S.~Moshavi, ``Multi-user detection for {DS-CDMA} communications,'' \emph{IEEE
  Communications Magazine}, vol.~34, no.~10, pp. 124--136, Oct. 1996.

\bibitem{duel1993decorrelating}
A.~Duel-Hallen, ``Decorrelating decision-feedback multiuser detector for
  synchronous code-division multiple-access channel,'' \emph{IEEE Transactions
  on Communications}, vol.~41, no.~2, pp. 285--290, Feb. 1993.

\bibitem{khattak2008base}
S.~Khattak, \emph{Base Station Cooperation Strategies for Multi-user Detection
  in Interference Limited Cellular Systems}.\hskip 1em plus 0.5em minus
  0.4em\relax Dresden, Germany: J{\"o}rg Vogt Verlag, 2008.

\bibitem{xu2013joint}
H.~Xu and P.~Ren, ``Joint user scheduling and power control for cell-edge
  performance improvement in backhaul-constrained network {MIMO},'' in
  \emph{IEEE 24th Annual International Symposium on Personal, Indoor, and
  Mobile Radio Communications (PIMRC)}, London, UK, Sept. 2013, pp. 1342--1346.

\bibitem{agarwal2014qos}
S.~Agarwal, S.~De, S.~Kumar, and H.~M. Gupta, ``Qos-aware downlink cooperation
  for cell-edge and handoff users,'' \emph{IEEE Transactions on Vehicular
  Technology}, vol.~64, no.~6, pp. 2512--2527, Jun. 2014.

\bibitem{kettenring1971canonical}
J.~R. Kettenring, ``Canonical analysis of several sets of variables,''
  \emph{Biometrika}, vol.~58, no.~3, pp. 433--451, 1971.

\bibitem{khattak2008distributed}
S.~Khattak, W.~Rave, and G.~Fettweis, ``Distributed iterative multiuser
  detection through base station cooperation,'' \emph{EURASIP Journal on
  Wireless Communications and Networking}, vol. 2008, p.~17, Feb. 2008.

\bibitem{karakayali2006network}
M.~K. Karakayali, G.~J. Foschini, and R.~A. Valenzuela, ``Network coordination
  for spectrally efficient communications in cellular systems,'' \emph{IEEE
  Wireless Communications}, vol.~13, no.~4, pp. 56--61, Aug. 2006.

\bibitem{rao2007reverse}
A.~M. Rao, ``Reverse link power control for managing inter-cell interference in
  orthogonal multiple access systems,'' in \emph{IEEE 66th Vehicular Technology
  Conference}, MD, USA, Oct. 2007, pp. 1837--1841.

\bibitem{das2006interference}
S.~Das and H.~Viswanathan, ``Interference mitigation through intelligent
  scheduling,'' in \emph{Proceedings of the Asilomar Conference on Signals and
  Systems (Asilomar)}, CA, USA, Nov. 2006.

\bibitem{balachandran2010nice}
K.~Balachandran, J.~H. Kang, K.~Karakayali, and K.~M. Rege, ``Nice: A network
  interference cancellation engine for opportunistic uplink cooperation in
  wireless networks,'' \emph{IEEE Transactions on Wireless Communications},
  vol.~10, no.~2, pp. 540--549, Dec. 2010.

\bibitem{marsch2011uplink}
P.~Marsch and G.~Fettweis, ``Uplink {CoMP} under a constrained backhaul and
  imperfect channel knowledge,'' \emph{IEEE Transactions on Wireless
  Communications}, vol.~10, no.~6, pp. 1730--1742, Apr. 2011.

\bibitem{ge2009does}
H.~Ge, I.~P. Kirsteins, and X.~Wang, ``Does canonical correlation analysis
  provide reliable information on data correlation in array processing?'' in
  \emph{IEEE International Conference on Acoustics, Speech and Signal
  Processing}, Taipei, Taiwan, May 2009, pp. 2113--2116.

\bibitem{wu1994music}
Q.~Wu and K.~M. Wong, ``Un-music and un-cle: An application of generalized
  correlation analysis to the estimation of the direction of arrival of signals
  in unknown correlated noise,'' \emph{IEEE Transactions on Signal Processing},
  vol.~42, no.~9, pp. 2331--2343, Sept. 1994.

\bibitem{bai2005radar}
Z.~Bai, G.~Huang, and L.~Yang, ``A radar anti-jamming technology based on
  canonical correlation analysis,'' in \emph{International Conference on Neural
  Networks and Brain}, vol.~1, Beijing, China, Oct. 2005, pp. 9--12.

\bibitem{li2009joint}
Y.-O. Li, T.~Adali, W.~Wang, and V.~D. Calhoun, ``Joint blind source separation
  by multiset canonical correlation analysis,'' \emph{IEEE Transactions on
  Signal Processing}, vol.~57, no.~10, pp. 3918--3929, Oct. 2009.

\bibitem{borga2001canonical}
M.~Borga and H.~Knutsson, ``A canonical correlation approach to blind source
  separation,'' \emph{Report LiU-IMT-EX-0062 Department of Biomedical
  Engineering, Linkping University}, 2001.

\bibitem{liu2007analysis}
W.~Liu, D.~P. Mandic, and A.~Cichocki, ``Analysis and online realization of the
  {CCA} approach for blind source separation,'' \emph{IEEE Transactions on
  Neural Networks}, vol.~18, no.~5, pp. 1505--1510, Sept. 2007.

\bibitem{bertrand2015distributed}
A.~Bertrand and M.~Moonen, ``Distributed canonical correlation analysis in
  wireless sensor networks with application to distributed blind source
  separation,'' \emph{IEEE Transactions on Signal Processing}, vol.~63, no.~18,
  pp. 4800--4813, Sept. 2015.

\bibitem{arora2014multi}
R.~Arora and K.~Livescu, ``Multi-view learning with supervision for transformed
  bottleneck features,'' in \emph{IEEE International Conference on Acoustics,
  Speech and Signal Processing (ICASSP)}, Florence, Italy, May 2014, pp.
  2499--2503.

\bibitem{chen2019graph}
J.~Chen, G.~Wang, and G.~B. Giannakis, ``Graph multiview canonical correlation
  analysis,'' \emph{IEEE Transactions on Signal Processing}, vol.~67, no.~11,
  pp. 2826--2838, Jun. 2019.

\bibitem{salah2019}
M.~S. Ibrahim and N.~Sidiropoulos, ``Cell-edge interferometry: Reliable
  detection of unknown cell-edge users via canonical correlation analysis,'' in
  \emph{IEEE International Conference on Signal Processing Advances in Wireless
  Communications (SPAWC)}, Cannes, France, Jul. 2019, pp. 1--5.

\bibitem{FuGCCA}
X.~{Fu}, K.~{Huang}, E.~E. {Papalexakis}, H.~{Song}, P.~P. {Talukdar}, N.~D.
  {Sidiropoulos}, C.~{Faloutsos}, and T.~{Mitchell}, ``Efficient and
  distributed algorithms for large-scale generalized canonical correlations
  analysis,'' in \emph{IEEE 16th International Conference on Data Mining
  (ICDM)}, Barcelona, Spain, Dec. 2016, pp. 871--876.

\bibitem{FuMAXVAR}
X.~{Fu}, K.~{Huang}, M.~{Hong}, N.~D. {Sidiropoulos}, and A.~M. {So},
  ``Scalable and flexible multiview max-var canonical correlation analysis,''
  \emph{IEEE Transactions on Signal Processing}, vol.~65, no.~16, pp.
  4150--4165, Aug. 2017.

\bibitem{KanSUMCOR}
C.~I. {Kanatsoulis}, X.~{Fu}, N.~D. {Sidiropoulos}, and M.~{Hong}, ``Structured
  sumcor multiview canonical correlation analysis for large-scale data,''
  \emph{IEEE Transactions on Signal Processing}, vol.~67, no.~2, pp. 306--319,
  Jan. 2019.

\bibitem{carroll1968generalization}
J.~D. Carroll, ``Generalization of canonical correlation analysis to three or
  more sets of variables,'' in \emph{Proceedings of the 76th Annual Convention
  of the American Psychological Association}, vol.~3.\hskip 1em plus 0.5em
  minus 0.4em\relax Washington, DC, 1968, pp. 227--228.

\bibitem{horst1961generalized}
P.~Horst, ``Generalized canonical correlations and their applications to
  experimental data,'' \emph{Journal of Clinical Psychology}, vol.~17, no.~4,
  pp. 331--347, 1961.

\bibitem{van1997analytical}
A.-J. van~der Veen, ``Analytical method for blind binary signal separation,''
  \emph{IEEE Transactions on Signal Processing}, vol.~45, no.~4, pp.
  1078--1082, Apr. 1997.

\bibitem{wold1987principal}
S.~Wold, K.~Esbensen, and P.~Geladi, ``Principal component analysis,''
  \emph{Chemometrics and Intelligent Laboratory Systems}, vol.~2, no. 1-3, pp.
  37--52, 1987.

\bibitem{hardoon2004canonical}
D.~R. Hardoon, S.~Szedmak, and J.~Shawe-Taylor, ``Canonical correlation
  analysis: An overview with application to learning methods,'' \emph{Neural
  Computation}, vol.~16, no.~12, pp. 2639--2664, 2004.

\bibitem{hotelling1936relations}
H.~Hotelling, ``Relations between two sets of variates,'' \emph{Biometrika},
  vol.~28, no. 3/4, pp. 321--377, Dec. 1936.

\bibitem{rupnik2013comparison}
J.~Rupnik, P.~Skraba, J.~Shawe-Taylor, and S.~Guettes, ``A comparison of
  relaxations of multiset cannonical correlation analysis and applications,''
  \emph{arXiv preprint arXiv:1302.0974}, 2013.

\bibitem{zhang2011towards}
L.-H. Zhang, L.-Z. Liao, and L.-M. Sun, ``Towards the global solution of the
  maximal correlation problem,'' \emph{Journal of Global Optimization},
  vol.~49, no.~1, pp. 91--107, Feb. 2011.

\bibitem{van1983matrix}
C.~F. Van~Loan and G.~H. Golub, \emph{Matrix computations}.\hskip 1em plus
  0.5em minus 0.4em\relax Baltimore, MD, USA: Johns Hopkins University Press
  Baltimore, 1983.

\bibitem{salahtwc}
M.~S. Ibrahim and N.~D. Sidiropoulos, ``Reliable detection of cell-edge users
  via canonical correlation analysis,'' \emph{arXiv Preprint,
  arXiv:2003.02255}, 2019.

\bibitem{gesbert2010multi}
D.~Gesbert, S.~Hanly, H.~Huang, S.~S. Shitz, O.~Simeone, and W.~Yu,
  ``Multi-cell {MIMO} cooperative networks: A new look at interference,''
  \emph{IEEE Journal on Selected Areas in Communications}, vol.~28, no.~9, pp.
  1380--1408, Oct. 2010.

\bibitem{zhang2010cooperative}
R.~Zhang, ``Cooperative multi-cell block diagonalization with per-base-station
  power constraints,'' \emph{IEEE Journal on Selected Areas in Communications},
  vol.~28, no.~9, pp. 1435--1445, Jul. 2010.

\bibitem{sornsen2020}
M.~Sornsen, C.~I. Kanatsoulis, and N.~D. Sidiropolous, ``Generalized canonical
  correlation analysis: {A} subspace intersection approach,'' \emph{arXiv
  Preprint, arXiv:2003.11205}.

\bibitem{larsson2014massive}
E.~G. Larsson, O.~Edfors, F.~Tufvesson, and T.~L. Marzetta, ``Massive {MIMO}
  for next generation wireless systems,'' \emph{IEEE Communications Magazine},
  vol.~52, no.~2, pp. 186--195, Feb. 2014.

\bibitem{li2000blind}
T.~Li and N.~D. Sidiropoulos, ``Blind digital signal separation using
  successive interference cancellation iterative least squares,'' \emph{IEEE
  Transactions on Signal Processing}, vol.~48, no.~11, pp. 3146--3152, Nov.
  2000.

\end{thebibliography}

\end{document}